\renewcommand{\tt}{\ttfamily}
\newcommand{\codefont}{\tt\small}
\newcommand{\code}[1]{\mbox{\codefont{#1}}}
\newcommand{\ccode}[1]{``\code{#1}''}
\newcommand{\us}{\char95} 
\newcommand{\Ac}{{\cal{A}}} 
\newcommand{\Cc}{{\cal{C}}} 
\newcommand{\Fc}{{\cal{F}}} 
\newcommand{\dom}[1]{\ensuremath{\mathit{dom}(#1)}}
\newcommand{\var}[1]{\ensuremath{\mathit{Var}(#1)}}
\newcommand{\io}{\ensuremath{\mathit{io}}}
\renewcommand{\emptyset}{\varnothing}
\renewcommand{\epsilon}{\varepsilon}
\newcommand{\iomapsto}{\hookrightarrow} 
\newcommand{\IO}{\mathit{IO}}
\newcommand{\CT}{\mathit{CT}}
\newcommand{\iosubst}[3]{(#1,#2,#3)} 
\newcommand{\rulename}[1]{\mbox{\sf #1}}
\newcommand{\vrulename}[1]{\mbox{\sf #1$_{\mathit{nf}}$}}
\newcommand{\seq}[1]{\overline{#1}} 
\begin{document}
\pagestyle{plain}

\title{Inferring Non-Failure Conditions\\ for Declarative Programs}

\author{
Michael Hanus
}
\institute{
Institut f\"ur Informatik, Kiel University, Germany\\
\email{mh@informatik.uni-kiel.de}
}

\maketitle

\begin{abstract}
Unintended failures during a computation are painful but
frequent during software development.
Failures due to external reasons (e.g., missing files, no permissions)
can be caught by exception handlers.
Programming failures, such as calling a partially defined
operation with unintended arguments, are often not caught
due to the assumption that the software is correct.
This paper presents an approach to verify such assumptions.
For this purpose, non-failure conditions for operations are
inferred and then checked in all uses of partially defined operations.
In the positive case, the absence of such failures is ensured.
In the negative case, the programmer could adapt the program
to handle possibly failing situations and check the program again.
Our method is fully automatic and can be applied to larger
declarative programs.
The results of an implementation for functional logic Curry programs
are presented.
\end{abstract}

\section{Introduction}
\label{sec:intro}

The occurrence of failures during a program execution is
painful but still frequent when developing software systems.
The main reasons for such failures are
\begin{itemize}
\item
external, i.e., outside the control of the program,
like missing files or access rights, unexpected formats of external data,
etc.
\item
internal, i.e., programming errors
like calling a partially defined operation with unintended arguments.
\end{itemize}
External failures can be caught by exception handlers
to avoid a crash of the entire software system.
Internal failures are often not caught since they should not occur
in a correct software system.
In practice, however, they occur during software development
and even in deployed systems which results in expensive debugging tasks.
For instance, a typical internal failure in imperative programs
is dereferencing a pointer variable whose current value
is the null pointer (due to this often occurring failure,
Tony Hoare called the introduction of null pointers his
``billion dollar mistake''\footnote{%
\url{http://qconlondon.com/london-2009/speaker/Tony+Hoare}}).

Although null pointer failures cannot occur in declarative programs,
such programs might contain other typical programming errors,
like failures due to incomplete pattern matching.
For instance, consider the following operations (shown in Haskell syntax)
which compute the first element and the tail of a list:
\begin{curry}
head :: [a] -> a               tail :: [a] -> [a]
head (x:xs) = x                 tail (x:xs) = xs
\end{curry}
In a correct program, it must be ensured that \code{head} and \code{tail}
are not evaluated on empty lists.
If we are not sure about the data provided at run time,
we can check the arguments of partial operations before the application.
For instance, the following code snippet
defines an operation to read a command together with some
arguments from standard input (the operation \code{words}
breaks a string into a list of words separated
by white spaces) and calls an operation \code{processCmd}
with the input data:\label{sec:readCmd}
\begin{curry}
readCmd = do putStr "Input a command:"
             s <- getLine
             let ws = words s
             case null ws of True  -> readCmd
                             False -> processCmd$\;$(head ws)$\;$(tail ws)
\end{curry}
By using the predicate \code{null} to check the emptiness of a list,
it is ensured that \code{head} and \code{tail} are not applied to an empty list
in the \code{False} branch of the case expression.

In this paper we present a fully automatic tool which can verify
the non-failure of this program.
Our technique is based on analyzing the types of arguments
and results of operations in order to ensure that partially defined
operations are called with arguments of appropriate types.
The principle idea to use type information for this purpose is not new.
For instance, with \emph{dependent types}, as in
Agda \cite{Norell08}, Coq \cite{BertotCasteran04}, or Idris \cite{Brady13},
or \emph{refinement types}, as in
LiquidHaskell \cite{VazouSeidelJhala14,VazouEtAl14},
one can express restrictions on arguments of operations.
Since one has to prove that these restrictions hold during the
construction of programs, the development of such programs becomes
harder \cite{Stump16}.
Another alternative, proposed in \cite{Hanus18PPDP},
is to annotate operations with \emph{non-fail conditions}
and verify that these conditions hold at each call site
by an external tool, e.g., an SMT solver \cite{deMouraBjorner08}.
In this way, the verification is fully automatic but
requires user-defined annotations and, in some cases,
also the verification of post-conditions or contracts to state
properties about result values of operations \cite{Hanus20FI}.

The main idea of this work is to \emph{infer}
non-fail conditions of operations.
Since the inference of precise conditions is undecidable in general,
we approximate them by \emph{abstract types},
e.g., finite representations of sets of values.
In particular, our contributions are as follows.
\begin{enumerate}
\item 
We define a \emph{call type} for each operation.
If the actual arguments belong to the call type,
the operation is reducible with some rule.
\item
For each operation, we define \emph{in/out types}
to approximate its input/output behavior.
\item
For each call to an operation $g$ occurring in a rule defining $f$,
we check, by considering the call structure and in/out types,
whether the call type of $g$ is satisfied.
If this is not the case, the call type of $f$ is refined
and we repeat the checks with the refined call type.
\end{enumerate}
At the end of this process, each operation has some correct call type
which ensures that it does not fail on arguments belonging to its call type.
Note that the call type might be empty on always failing operations.
To avoid empty call types, one can modify the program code
so that a different branch is taken in case of a failure.

In order to make our approach accessible to various declarative
languages, we formulate and implement it in the
declarative multi-paradigm language Curry \cite{Hanus16Curry}.
Since Curry extends Haskell by logic programming features
and there are also methods to transform logic programs into
Curry programs \cite{Hanus22TPLP}, our approach can also be applied
to purely functional or logic programs.
A consequence of using Curry is the fact that programs might compute with
failures, e.g., it is not an immediate programming error to apply
\code{head} and \code{tail} to possibly empty lists.
However, subcomputations involving such possibly failing calls
must be encapsulated so that it can be checked whether such a computation
has no result (this corresponds to exception handling in
deterministic languages).
If this is done, one can ensure that the overall computation
does not fail even in the presence of encapsulated logic
(non-deterministic) subcomputations.

The paper is structured as follows.
After sketching the basics of Curry in the next section,
we introduce call types and their abstraction in Sect.~\ref{sec:atypes}.
Section~\ref{sec:inouttypes} defines in/out types and
methods to approximate them.
The main section~\ref{sec:inference} presents our method
to infer and check call types for all operations in a program.
We evaluate our approach in Sect.~\ref{sec:evaluation}
before we conclude with a discussion of related work.
The correctness results and their proofs
are contained in the appendix.

\section{Functional Logic Programming and Curry}
\label{sec:flp}

As mentioned above, we develop and implement our method in Curry
so that it is also available for purely functional or logic programs.
Curry \cite{Hanus16Curry} amalgamates features from functional programming
(demand-driven evaluation, strong typing, higher-order functions)
and logic programming
(computing with partial information, unification, constraints),
see \cite{AntoyHanus10CACM,Hanus13} for surveys.
The syntax of Curry is close to Haskell\footnote{%
Variables and function names usually
start with lowercase letters and the names of type and data constructors
start with an uppercase letter. The application of $f$
to $e$ is denoted by juxtaposition (``$f~e$'').}
\cite{PeytonJones03Haskell}.
In addition to Haskell, Curry applies rules
with overlapping left-hand sides in a (don't know) non-deterministic manner
(where Haskell always selects the first matching rule)
and allows \emph{free} (\emph{logic}) \emph{variables} in conditions
and right-hand sides of defining rules.
The operational semantics is based on an optimal evaluation strategy
\cite{AntoyEchahedHanus00JACM}---a conservative extension
of lazy functional programming and logic programming.

Curry is strongly typed so that
a Curry program consists of data type definitions
(introducing \emph{constructors} for data types) and
\emph{functions} or \emph{operations} on these types.
As an example, we show the definition of two
operations on lists: the well-known list concatenation \ccode{++}
and an operation \code{dup} which returns some list element having at least
two occurrences:\footnote{Note that Curry requires the explicit declaration
of free variables, as \code{x} in the rule of \code{dup},
to ensure checkable redundancy.
Anonymous variables, denoted by an underscore, need not be declared.}
\begin{curry}
(++) :: [a] -> [a] -> [a]       dup :: [a] -> a
[]     ++ ys = ys                 dup xs | xs == _$\,$++$\,$[x]$\,$++$\,$_$\,$++$\,$[x]$\,$++$\,$_
(x:xs) ++ ys = x : (xs ++ ys)            = x    where$\;$x$\;$free
\end{curry}
Since \code{dup} might deliver more than one result for a given argument,
e.g., \code{dup$\,$[1,2,2,1]} yields \code{1} and \code{2},
it is also called a \emph{non-deterministic operation}.
Such operations, which are interpreted as mappings
from values into sets of values \cite{GonzalezEtAl99},
are an important feature of contemporary functional logic languages.
To express failing computations, there is also a predefined operation
\code{failed} which has no value.

\begin{figure*}[t]
\begin{displaymath}
\begin{array}{lcl@{\hspace{5ex}}l}
P & ::= & D_1 \ldots D_m  & \mbox{(program)} \\
D & ::= & f(x_1,\ldots,x_n) = e  & \mbox{(function definition)} \\
e & ::= & x & \mbox{(variable) } \\
  & | & c(x_1,\ldots,x_n) & \mbox{(constructor application) } \\
  & | & f(x_1,\ldots,x_n)  & \mbox{(function call) } \\
  & | & e_1~\mathit{or}~e_2 & \mbox{(disjunction) } \\
  & | & \mathit{let}~x_1,\ldots,x_n ~\mathit{free~in}~ e
       & \mbox{(free variables) } \\
  & | & \mathit{let}~x = e ~\mathit{in}~ e'
       & \mbox{(let binding) } \\
  & | & \mathit{case}~x~\mathit{of}~\{p_1\to e_1; \ldots; p_n \to e_n\}
                         & \mbox{(case expression) } \\
p & ::= & c(x_1,\ldots,x_n)     & \mbox{(pattern)} 
\end{array}
\end{displaymath}
\caption{Syntax of the intermediate language FlatCurry}\label{fig:flatcurry}
\end{figure*}

Curry has more features than described so far.\footnote{%
Conceptually, Curry is intended as an extension of Haskell
although not all extensions of Haskell are actually supported.}
Due to these numerous features,
language processing tools for Curry (compilers, analyzers,\ldots)
often use an intermediate language where the
syntactic sugar of the source language has been eliminated
and the pattern matching strategy is explicit.
This intermediate language, called FlatCurry,
has also been used, apart from compilers,
to specify the operational semantics
of Curry programs \cite{AlbertHanusHuchOliverVidal05}
or to implement a modular framework for the analysis of
Curry programs \cite{HanusSkrlac14}.
Since we will use FlatCurry to describe and implement our
inference method,
we sketch the structure of FlatCurry programs.

The abstract syntax of FlatCurry is summarized
in Fig.~\ref{fig:flatcurry}.
A FlatCurry program consists of a sequence of function definitions
(for the sake of simplicity, data type definitions are omitted),
where each function is defined by a single rule.
Patterns in source programs are compiled into case expressions,
overlapping rules are joined by explicit disjunctions, and
arguments of constructor and function calls are variables
(introduced in left-hand sides, let expressions, or patterns).
We will write $\Fc$ for the set of defined operations and
$\Cc$ for the set of constructors of a program.
In order to provide a simple definition of our inference method,
we assume that FlatCurry programs satisfy the following properties:
\begin{itemize}
\item All variables introduced in a rule (parameters, free variables,
      let bindings, pattern variables) have unique identifiers.
\item For the sake of simplicity, let bindings are non-recursive,
      i.e., all recursion is introduced by functions
      (although our implemented tool supports recursive bindings).
\item The patterns in each case expression are non-overlapping
      and cover all data constructors of the type of the
      discriminating variable. Hence, if this type
      contains $n$ constructors, there are $n$ branches without
      overlapping patterns. This can be ensured by adding
      missing branches with failure expressions (\code{failed}).
\end{itemize}
Usually, the front end of a Curry compiler transforms
source programs into such a form for easier compilation
\cite{Antoy01PPDP,AntoyHanusJostLibby20}.
For instance, the operation \code{head} is transformed into
the following FlatCurry program:
\begin{curry}
head(zs) = case zs of { x:xs -> x ; [] -> failed }
\end{curry}

\section{Call Types and Abstract Types}
\label{sec:atypes}

We consider a computation as \emph{non-failing} if it does not stop
due to a pattern mismatch or a call to \code{failed}.
In order to infer conditions on arguments of operations so that
the evaluation of an operation does not fail,
we will analyze the rules of each operation.\footnote{Note that
we do not consider external failures of operations,
like file access errors, since they need to be handled differently.}
For instance, the operation \code{head} is defined by the single rule
\begin{curry}
head (x:xs) = x
\end{curry}
Since there is no rule covering the empty list,
the condition for a non-failing evaluation of \code{head} is the
non-emptiness of the argument list.
Sometimes the exact condition requires more advanced descriptions.
For instance, consider the operation \code{lastTrue} defined by
\begin{curry}
lastTrue [True]   = True
lastTrue (x:y:ys) = lastTrue (y:ys)
\end{curry}
The evaluation of a call \code{lastTrue$\;l$} does not fail
if the argument list $l$ ends with \code{True}.
Although such lists could be finitely described using
regular types \cite{DartZobel92},
such a description is impossible for arbitrary operations.
For instance, if some branch in a condition of an operation
causes a failure but the condition of the branch contains a function call,
the failure is only relevant if the function call terminates.
Due to the undecidability of the halting problem, we cannot hope
to infer exact non-failure conditions.

Due to this general problem, we \emph{approximate}
non-failure conditions so that the evaluation of a call where the arguments
satisfy the non-failure conditions is non-failing.
However, there might be successfully evaluable calls which do not satisfy
the inferred non-failure condition.

In order to support different structures to approximate
non-failure conditions, we do not fix a language for call types
but assume that there is a domain $\Ac$ of \emph{abstract types}.
Elements of this domain describe sets of concrete \emph{data terms},
i.e., terms consisting of data constructors only.
There are various options for such abstract types,
like depth-$k$ abstractions \cite{SatoTamaki84} or
regular types \cite{DartZobel92}.
The latter have been used to infer success types to analyze
logic programs \cite{GallagherHenriksen04}, whereas depth-$k$ abstractions
were used in the abstract diagnosis of functional programs
\cite{AlpuenteEtAl02} or in the abstraction of term rewriting systems
\cite{BertEchahed95,BertEchahedOstvold93}.
Since regular types are more complex and computationally more expensive,
we use depth-$k$ abstractions in our examples.
In this domain, denoted by $A_k$, subterms exceeding the given depth $k$
are replaced by a specific constant ($\top$) that represents any term.
Since the size of this domain is quickly growing for $k>1$, we use $k=1$
in examples, i.e.,
terms are approximated by their top-level constructors.
As we will see, this is often sufficient in practice to obtain
reasonable results.
Nevertheless, our technique and implementation is parametric
over the abstract type domain (results w.r.t.\ different domains
are shown in Appendix~\ref{appendix:domains}).

If $\Cc$ is the set of data constructors,
\emph{depth-1 types} can be simply described by the set
\[
\Ac_1 = \{ D \subseteq \Cc \mid
           \mbox{all constructors of $D$ belong to the same type} \}
        \cup \{ \top \}
\]
Hence, each element of $\Ac_1$ is either a set of data constructors of the
same type or $\top$.
The latter denotes the set of all data terms when no type information
is available.

Following the framework of abstract interpretation \cite{CousotCousot77},
the meaning of abstract values is specified by a concretization function
$\gamma$. For $\Ac_1$, $\gamma$ is defined by
\[
\begin{array}{r@{~~}c@{~~}l}
\gamma(\top) & = & \{ t \mid t \mbox{~is a data term}\}\\
\gamma(D)    & = & \{ t \mid t = c(t_1,\ldots,t_n) \mbox{~is a data term with }
                             c \in D \}
\end{array}
\]
Thus, $\emptyset$ is the bottom element of this domain
w.r.t.\ the standard ordering defined by
\[
\begin{array}{r@{~\sqsubseteq~}l@{\quad}l}
a & \top & \mbox{for any } a \\
a_1 & a_2 & \mbox{if } a_1 \subseteq a_2
\end{array}
\]
In the following, we present a framework for the inference of
call types which is parametric over the abstract domain $\Ac$.
Thus, we assume that $\Ac$ is a lattice with an ordering $\sqsubseteq$,
greatest lower bound ($\sqcap$) and least upper bound ($\sqcup$) operations,
a least or bottom element $\bot$, and a greatest or top element $\top$.
Furthermore, for each $n$-ary data constructor $c$,
there is an \emph{abstract constructor application}
$c^{\alpha}$ which maps abstract values $a_1,\ldots,a_n$
into an abstract value $a$
such that $c(t_1,\ldots,t_n) \in \gamma(a)$ for all
$t_1 \in \gamma(a_1),\ldots,t_n \in \gamma(a_n)$.
For the domain $\Ac_1$, which we use in the following in concrete examples,
this can be defined by $c^{\alpha}(x_1,\ldots,x_n) = \{c\}$
(it could also be defined by $c^{\alpha}(x_1,\ldots,x_n) = \top$
but this yields less precise approximations).

We use $\Ac$ to specify \emph{call types}
or \emph{non-failure conditions} for operations.
Let $f$ be a unary operation (the extension to more than one argument
is straightforward).
A call type $C \in \Ac$ is \emph{correct} for $f$
if the evaluation of $f(t)$ is non-failing for any $t \in \gamma(C)$.
For instance, the depth-1 type $\{\code{:}\}$ is correct for the
operations \code{head} or \code{tail} defined above.

In order to verify the correctness of call types for a program,
we have to check whether each call of an operation satisfies its call type.
Since this requires the analysis of conditions and other operations
(see the operation \code{readCmd} defined in Sect.~\ref{sec:readCmd}),
we will approximate the input/output behavior of operations,
as described in the following section.

\section{In/Out Types}
\label{sec:inouttypes}

To provide a fully automatic inference method for call types,
we need some knowledge about the behavior of auxiliary operations.
For instance, consider the operation \code{null} defined by
\begin{curry}
null :: [a] -> Bool
null []     = True
null (x:xs) = False
\end{curry}
This operation is used in the definition of
\code{readCmd} (see Sect.~\ref{sec:readCmd})
to ensure that \code{head} and \code{tail} are applied to
non-empty lists.
In order to verify this property, we have to infer that,
if \code{null\;ws} evaluates to \code{False},
the argument is a non-empty list.

For this purpose, we associate an in/out type to each operation.
An \emph{in/out type} $\io$ for an $n$-ary operation $f$
is a set of elements containing a sequence of $n+1$ abstract types, i.e.,
\[
\io \subseteq \{ a_1 \cdots a_n \iomapsto a \mid a_1,\ldots,a_n,a \in \Ac \}
\]
The first $n$ components of each element approximate input values
(where we write $\epsilon$ if $n=0$)
and the last component approximate output values associated to the inputs.
An in/out type $\io$ is \emph{correct} for $f$
if, for each value $t'$ of $f(t_1,\ldots,t_n)$,
there is some $a_1 \cdots a_n \iomapsto a \in \io$
such that $t_i \in \gamma(a_i)$ ($i = 1,\ldots,n$) and $t' \in \gamma(a)$.

Thus, in/out types are disjunctions of possible input/output behaviors
of an operation.
For instance, a correct in/out type of \code{null} w.r.t.\ $\Ac_1$ is
\begin{curry}
$\{ \{\code{[]}\} \iomapsto \{\code{True}\}, \{\code{:}\} \iomapsto \{\code{False}\} \}$
\end{curry}
Another trivial and less precise in/out type is
$\{ \top \iomapsto \top \}$.

In/out types allow also to express non-terminating operations.
For instance, a correct in/out type for the operation \code{loop}
defined by
\begin{curry}
loop = loop
\end{curry}
is $\{ \epsilon \iomapsto \emptyset \}$. The empty type in the result indicates
that this operation does not yield any value.

Similarly to call types, we approximate in/out types
since the inference of precise in/out types is intractable in general.
For this purpose, we analyze the definition of each operation
and associate patterns to result values.
Result values are based on general information about the
abstract result types of operations.
Therefore, we assume that there is a mapping
$
R : \Fc \to \Ac
$
which associates to each defined function $f \in \Fc$
an abstract type $R(f) \in \Ac$ approximating the possible values
to which $f$ (applied to some arguments) can be evaluated.
For instance,
$R(\code{loop}) = \emptyset$,
$R(\code{null}) = \{\code{False},\code{True}\}$, and
$R(\code{head}) = \top$
(w.r.t.\ the domain $\Ac_1$).
Approximations for $R$ can be computed in a straightforward
way by a fixpoint computation.
Using the Curry analysis framework CASS \cite{HanusSkrlac14},
this program analysis can be defined in 20 lines of code---basically
a case distinction on the structure
of FlatCurry programs.\footnote{See module \code{Analysis.Values}
of the Curry package
\url{https://cpm.curry-lang.org/pkgs/cass-analysis.html}}

\begin{figure*}[t]
\begin{displaymath}
\begin{array}{r@{\hspace{2ex}}cl}
\rulename{Var} &
  \Gamma ~\vdash x : \{ \Gamma \iomapsto \Gamma(x) \}
  & \mbox{\hspace*{-0ex}($x$ variable)} \\[1ex]
\rulename{Cons} &
  \Gamma ~\vdash c(x_1,\ldots,x_n) : \{ \Gamma \iomapsto c^{\alpha}(\Gamma(x_1),\ldots,\Gamma(x_n)) \}
  & \mbox{\hspace*{-0ex}($c$ constructor)} \\[1ex]
\rulename{Func} &
  \Gamma ~\vdash f(x_1,\ldots,x_n) : \{ \Gamma \iomapsto R(f) \}
  & \mbox{\hspace*{-0ex}($f$ operation)} \\[1ex]
\rulename{Or} &
 {\displaystyle
  \frac{\Gamma ~\vdash e_1 : io_1 \quad \Gamma ~\vdash e_2 : io_2}
       {\Gamma ~\vdash e_1~or~e_2 : io_1 \cup io_2}
} & \\[3ex]

\rulename{Free} & {\displaystyle
  \frac{\Gamma[\seq{x_n \mapsto \top}] \vdash e : io }
{\Gamma \vdash  \mathit{let}~x_1,\ldots,x_n ~\mathit{free~in}~ e : io }
} & \\[3ex]

\rulename{Let} & {\displaystyle
  \frac{\Gamma[x \mapsto \top] \vdash e' : io }
       {\Gamma \vdash let~ x = e~ in~e' : io } } & \\[3ex]

\rulename{Case} &
{\displaystyle
  \frac{\Gamma_1 \vdash e_1 : io_1 ~~\ldots~~ \Gamma_n \vdash e_n : io_n}
       {\Gamma \vdash \mathit{case}~x~\mathit{of}~\{p_1\to e_1; \ldots; p_n \to e_n\} : io_1\cup\ldots\cup io_n} } \\[3ex]
 & \mbox{ where } p_i = c_i(\seq{x_{n_i}}) \mbox{ and }
   \Gamma_i = \Gamma[x \mapsto c_i^{\alpha}(\seq{\top}),
   \seq{x_{n_i} \mapsto \top}]
\end{array}
\end{displaymath}
\caption{Approximation of in/out types}\label{fig:inouttypes}
\end{figure*}

Our actual approximation of in/out types is defined
by the rules in Fig.~\ref{fig:inouttypes}.
A sequence $o_1,\ldots,o_n$ of objects is abbreviated by $\seq{o_n}$.
We use a \emph{type environment} $\Gamma$ which maps variables
into abstract types.
We denote by $\Gamma[x \mapsto e]$ the environment $\Gamma'$ with
$\Gamma'(x) = e$ and $\Gamma'(y) = \Gamma(y)$ for all $x \neq y$.
The judgement
$
\Gamma \vdash e : \{ \seq{\Gamma_k \iomapsto a_k} \}
$
is interpreted as ``the evaluation of the expression $e$
in the context $\Gamma$ yields a new context $\Gamma_i$ and result value
of abstract type $a_i$, for some $i \in \{1,\ldots,k\}$.''
To infer an in/out type $io$ of an operation $f$ defined by
$f(x_1,\ldots,x_n) = e$,
we derive the judgement
$\{ \seq{x_n \mapsto \top} \} \vdash e : \{ \seq{\Gamma_k \iomapsto a_k} \}$
and return the in/out type
\begin{curry}
$io = \{ \Gamma_i(x_1) \cdots \Gamma_i(x_n) \iomapsto a_i \mid i=1,\ldots,k \}$
\end{curry}
Thus, we derive an in/out type without any restriction on the arguments.

Let us consider the inference rules in more detail.
In the case of variables or applications, the type environment is not changed
and the approximated result is returned,
e.g., the abstract type of the variable (rule \rulename{Var}),
the abstract representation of the constructor (rule \rulename{Cons}),
or the approximated result value of the operation (rule \rulename{Func}).
Rule \rulename{Or} combines the results of the different branches.
Rules \rulename{Free} and \rulename{Let} add the new variables
to the type environment with most general types.
Although one could refine these types,
we try to keep the analysis simple since this seems
to be sufficient in practice.

The most interesting rule is \rulename{Case}.
The results from the different branches are combined,
but inside each branch, the type of the discriminating variable $x$
is refined to the constructor of the branch.
For instance, consider the operation
\begin{curry}
null(zs) = case zs of { [] -> True ; (x:xs) -> False }
\end{curry}
If we analyze the in/out type with our rules, we start with the
type environment $\Gamma_0 = \{ \code{zs} \mapsto \top \}$.
Inside the branch, $\Gamma_0$ is refined to
$\Gamma_1 = \{ \code{zs} \mapsto \{\code{[]}\} \}$ and
$\Gamma_2 = \{ \code{zs} \mapsto \{\code{:}\}, \code{x} \mapsto \top,
               \code{xs} \mapsto \top \}$, respectively,
so that the in/out type (w.r.t.\ $\Ac_1$) derived for \code{null} is
$\{ \{\code{[]}\}\iomapsto \{\code{True}\}, \{\code{:}\}\iomapsto \{\code{False}\} \}$.

In our implementation, we keep in/out types in a normalized form
where different pairs with identical in types are joined
by the least upper bound of their out types.
Moreover, the in/out types of failed branches are omitted so that we obtain
\begin{curry}
head : $\{ \{\code{:}\}\iomapsto \top \}$
tail : $\{ \{\code{:}\}\iomapsto \top \}$
\end{curry}

\section{Inference and Checking of Call Types}
\label{sec:inference}

Based on the pieces introduced in the previous sections,
we can present our method to infer and verify call types
for all operations in a given program.
Basically, our method performs the following steps:
\begin{enumerate}
\item
The in/out types for all operations are computed
(see Sect.~\ref{sec:inouttypes}).
\item
Initial call types for all operations are computed
by considering the left-hand sides or case structure of their
defining rules.
\item
These call types are abstracted w.r.t.\ the abstract type domain.
\item
For each call to an operation $g$ occurring in a rule defining operation $f$,
we check, by considering the call structure and in/out types,
whether the call type of $g$ is satisfied.
\item
If some operation cannot be verified due to unsatisfied
call type restrictions, its call type is refined
by considering the additional call-type constraints
due to operations called in its right-hand side,
and start again with step 4.
\end{enumerate}
This fixpoint computation terminates if the abstract type domain
is finite (which is the case for depth-$k$ types)
or it is ensured that there are only finitely many refinements
for each call type in step 5 (which could be ensured
by widening steps in infinite abstract domains \cite{CousotCousot77}).
In the worst case, an empty call type might be inferred for some
operation. This does not mean that this operation is not useful
but one has to encapsulate its use with some safeness check.

In the following, we describe these steps in more detail.

\subsection{Initial Call Types}

Concrete call types are easy to derive by considering
the structure of \code{case} expressions in the transformed FlatCurry program.
If all constructors of some data type are covered in non-failed branches
of some \code{case} construct, there is no call type restriction
due to this pattern matching. Otherwise, the call type restriction
consists of those constructors occurring in non-failed branches.
For instance, the operation \code{null} has no call type restriction,
whereas the operations \code{head} and \code{tail} have
failed branches for the empty list so that the call type restriction
could be expressed by the set of terms
\[
\{ t_1 \code{:} t_2 \mid t_1,t_2 \mbox{ are arbitrary terms} \}
\]
As already discussed, we map such sets into a finite representation
by using abstract types.
Hence, the \emph{abstract call type} of an $n$-ary operation
is a sequence of elements of $\Ac$ of length $n$.
We say that such a type is \emph{trivial} if all elements in
this sequence are $\top$.
In case of the abstract type domain $\Ac_1$,
the set above is abstracted to $\{\code{:}\}$, thus, it is non-trivial.
Since the derivation of concrete call types and their abstraction
is straightforward, we omit further details here.

\subsection{Call Type Checking}

We assume that two kinds of information are given
for a defined operation $f$:
\begin{itemize}
\item An in/out type $\IO(f)$ approximating the input/output behavior of $f$.
\item An abstract call type $\CT(f)$ specifying the requirements
      to evaluate $f$ without failure.
\end{itemize}
$\IO(f)$ can be computed as shown in Sect.~\ref{sec:inouttypes}.
$\CT(f)$ can be approximated as discussed above,
but we have to show that all calls to $f$ actually satisfy these
requirements.
This is the purpose of the inference system shown in
Fig.~\ref{fig:calltypecheck}.

As discussed in Sect.~\ref{sec:inouttypes},
it is important to have information about the input/output behavior
of operations. Therefore, we introduced the notion of in/out types.
Now we use this information to approximate values of variables
occurring in program rules and pass this information through
the rules during checking time.
For this purpose, we use \emph{variable types} which are triples
of the form $(z,\io,x_1 \ldots x_n)$
where $z,x_1,\ldots,x_n$ are program variables and $\io$ is an in/out type
for an $n$-ary operation.
This is interpreted as: $z$ might have some value
of the result type $a$ for some $a_1 \ldots a_n \iomapsto a \in \io$
and, in this case, $x_1,\ldots,x_n$ have values of type $a_1,\ldots,a_n$,
respectively.
For instance, the variable type
\begin{curry}
$(z,~\{ \{\code{[]}\}\iomapsto \{\code{True}\}, \{\code{:}\}\iomapsto \{\code{False}\} \},~ xs)$
\end{curry}
expresses that $z$ might have value \code{True} and $xs$ is an empty list,
or $z$ has value \code{False} and $xs$ is a non-empty list.
Since we approximate values, we abstract a set of variable environments
with concrete values for variables to a set of variable types.
If such a set contains only one triple for some variable
and the $io$ component is a one-element set, we can use it for
definite reasoning.
To have a more compact notation for the abstract type of a program variable,
we denote by $x::a$ the triple $(x, \{ \epsilon \mapsto a \}, \epsilon)$.

Now we have a closer look at the rules of Fig.~\ref{fig:calltypecheck}.
This inference system derives judgements of the form
$\Delta, z = e \vdash \Delta'$
containing sets of variable types $\Delta, \Delta'$,
a variable $z$, and an expression $e$.
This is interpreted as
``if $\Delta$ holds, then the expression $e$ evaluates without a failure
and, if $z$ is bound to the result of this evaluation, $\Delta'$ holds.''
\label{sec:func-call-type-checking}
To check the call type $a_1 \ldots a_n$ of an operation $f$ defined by
$f(x_1,\ldots,x_n) = e$,
we try to derive the judgement
\[
\{ x_1 :: a_1, \ldots, x_n :: a_n \}, z = e \vdash \Delta
\]
for some fresh variable $z$.
Thus, we assign the call types as initial values of the parameters
and analyze the right-hand side of the operation.

\begin{figure*}[t]
\begin{displaymath}
\begin{array}{r@{\hspace{2ex}}cl}
\vrulename{Var} &
  \Delta, z = x \vdash \{ (z, \{ \epsilon \mapsto \Delta(x) \}, \epsilon) \}
  & \\[1ex]
\vrulename{Cons} &
  \Delta, z = c(x_1,\ldots,x_n) \vdash
  \{\iosubst{z}{\{\top^n \iomapsto c^{\alpha}(\seq{\Delta(x_n)})\}}{x_1 \ldots x_n}\}
  & \\[1ex]
\vrulename{Func} &
  {\displaystyle
  \frac
    {\CT(f) = a_1 \ldots a_n \quad \Delta(x_i) \sqsubseteq a_i ~(i=1,\ldots,n)}
    {\Delta,z = f(x_1,\ldots,x_n) \vdash \{\iosubst{z}{\IO(f)}{x_1 \ldots x_n} \}} }
  & \\[3ex]
\vrulename{Or} &
 {\displaystyle
  \frac{\Delta,z = e_1 \vdash \Delta_1 \quad \Delta,z = e_2 \vdash \Delta_2}
       {\Delta,z = e_1~or~e_2 \vdash \Delta_1 \cup \Delta_2 }
} & \\[3ex]

\vrulename{Free} & {\displaystyle
 \frac{\Delta \cup \{ x_1 :: \top,\ldots, x_n :: \top \}, z = e \vdash \Delta' }
  {\Delta, z = \mathit{let}~x_1,\ldots,x_n ~\mathit{free~in}~ e \vdash \Delta' }
} & \\[3ex]

\vrulename{Let} & {\displaystyle
  \frac{\Delta, x = e \vdash \Delta' \quad
        \Delta \cup \Delta', z = e' \vdash \Delta'' }
{\Delta, z = let~ x = e~ in~e' \vdash \Delta'' }
} \\[3ex]

\vrulename{Case} &
{\displaystyle
  \frac{\Delta_{r_1}, z = e_{r_1} \vdash \Delta'_{r_1} ~~\ldots~~ \Delta_{r_k}, z = e_{r_k} \vdash \Delta'_{r_k}}
       {\Delta, z = \mathit{case}~x~\mathit{of}~\{p_1\to e_1; \ldots; p_n \to e_n\} \vdash \Delta'_{r_1} \cup\ldots\cup \Delta'_{r_k}} } \\[3ex]
 & \mbox{ where } p_i = c_i(\seq{x_{n_i}}),
   \Delta_i = (\Delta \wedge [x \mapsto c_i])
              \cup \{ x_1 :: \top, \ldots, x_{n_i} :: \top \}, \\
 & \mbox{ and } r_1,\ldots,r_k
   \mbox{ are the reachable branches (i.e., $\Delta_{r_j}(x) \neq \bot$)}\\

\end{array}
\end{displaymath}
\caption{Call type checking}\label{fig:calltypecheck}
\end{figure*}

Keeping the interpretation of variable types in mind,
the inference rules are not difficult to understand.
$\Delta(x)$ denotes the least upper bound of all abstract type
information about variable $x$ available in $\Delta$,
which is defined by
\[
\Delta(x) = \bigsqcup\; \{ a \mid
 (x,\{\ldots,a_1 \ldots a_n \iomapsto a,\ldots\},\ldots) \in \Delta \}
\]
Rule \vrulename{Var} is immediate since the evaluation of a value
cannot fail so that we set the result $z$ to the abstract type of $x$.
Rule \vrulename{Cons} adds the simple condition that $z$ is bound
to the constructor $c$ after the evaluation
($\top^n = \top \ldots \top$ is a sequence of $n$ $\top$ elements).
Rule \vrulename{Func} is the first interesting rule.
The condition states that the abstract arguments of the function
must be smaller than the required call type so that the
concrete values are in a subset relationship.
If the requirements on call types hold, the operation
is evaluable and we connect the results and the arguments
with the in/out type of the operation.
The rules for disjunctions and free variable introduction
are straightforward.
In rule \vrulename{Let}, the result of analyzing
the local binding is used to analyze the expression.
We finally discuss the most important rule for case selections.

In rule \vrulename{Case}, $\Delta \wedge [x \mapsto c_i]$
denotes the set of variable types $\Delta$ modified by the definite binding
of $x$ to the constructor $c_i$.
This means that, if $\Delta$ contains a triple $(x,io,xs)$,
all result values in $io$ which are incompatible to $c_i$ are removed.
After this modification of $\Delta$, it may happen that
$\Delta(x)$ is the empty type, i.e., there is no concrete
value which $x$ can have so that this branch is \emph{unreachable}.
Therefore, the right-hand side of this branch need not be analyzed
so that rule \vrulename{Case} does not consider them.
For the remaining reachable branches, the right-hand side is analyzed
with the modified set of variable types so that
the value in the specific branch value is considered.

As an example, we check the simple operation
\begin{curry}
f(x) = let y = null(x) in case y of True  -> True
                                    False -> head(x)
\end{curry}
Recall that, for the abstract type domain $\Ac_1$,
the in/out type of \code{null} is
\begin{curry}
$IO(\code{null}) = \{ \{\code{[]}\}\iomapsto \{\code{True}\}, \{\code{:}\}\iomapsto \{\code{False}\} \}$
\end{curry}
and the abstract call type of \code{head} is $\{\code{:}\}$.
When we check the right-hand side of the definition of \code{f},
we start the checking of the \code{case} (after having checked
the \code{let} binding) with the set of variable types
\begin{curry}
$\Delta_1 = \{ (\code{x}, \{\epsilon \iomapsto \top\}, \epsilon),~ (\code{y}, \{ \{\code{[]}\}\iomapsto \{\code{True}\}, \{\code{:}\}\iomapsto \{\code{False}\} \}, \code{x}) \}$
\end{curry}
The check of the first case branch is immediate.
For the second case branch, we modify the previous set of variable types
to $\Delta_2 = \Delta_1 \wedge [\code{y} \mapsto \code{False}]$
so that we have
\begin{curry}
$\Delta_2 = \{ (\code{x}, \{\epsilon \iomapsto \top\}, \epsilon),~ (\code{y}, \{ \{\code{:}\}\iomapsto \{\code{False}\} \}, \code{x}) \}$
\end{curry}
The definite binding for \code{y} implies a definite binding for \code{x}
so that $\Delta_2$ is equivalent to
\begin{curry}
$\Delta_3 = \{ (\code{x}, \{\epsilon \iomapsto \{\code{:}\}\}, \epsilon),~ (\code{y}, \{ \{\code{:}\}\iomapsto \{\code{False}\} \}, \code{x}) \}$
\end{curry}
Hence, if we check the call \ccode{head(x)} w.r.t.\ $\Delta_3$,
the abstract argument type is $\Delta_3(\code{x}) = \{\code{:}\}$
so that the call type of \code{head} is satisfied.

As we have seen in this example, sets of variable types should
be kept in a simplified form in order to deduce most precise
type information. For instance, the definite bindings of variables,
like $(\code{y}, \{ \{\code{:}\}\iomapsto \{\code{False}\} \}, \code{x})$,
should be propagated to get a definitive binding for \code{x}.
Although this is not explicitly stated in the inference rules,
we assume that it is always done whenever sets of variable types are
modified.

\subsection{Iterated Call Type Checking}

Consider the operation
\begin{curry}
hd(x) = head(x)
\end{curry}
Applying the inference rules of Fig.~\ref{fig:calltypecheck}
is not successful: the initial abstract call type for
\code{hd} is $\top$ so that the call type requirement for \code{head}
is not satisfied.

In order to compute call types for all operations,
we try to refine the call type of \code{hd}.
For this purpose, we collect the requirements on variables
for unsatisfied call types during the check of an operation.
If such a required type is on some variable occurring in the
left-hand side of an operation, the call type of the operation
is restricted and the operation is checked again.
In case of the operation \code{hd},
the failure in the call \code{head(x)} leads to the requirement
that \code{x} must have the abstract type $\{\code{:}\}$
so that we check \code{hd} again but with this new call type---which
is now successful.

There are also cases where such a refinement is not possible.
For instance, consider the slightly modified example
\begin{curry}
hdfree(x) = let y free in head(y)
\end{curry}
Since the type restriction $\{\code{:}\}$ on variable \code{y}
can not be obtained by restricting the call type of \code{hdfree},
we assume the most restricted call type $\CT(\code{hdfree}) = \{\}$.
This means that any call to \code{hdfree} might fail so that
one has to encapsulate calls to \code{hdfree} with some safeness check.

This strategy leads to an iterated analysis of call types.
In each iteration, either all call types can be verified
or the call type of some operation becomes more restricted.
This iteration always terminates if one can ensure finitely many refinements
of call types (which is the case for depth-$k$ types).

For an efficient computation of this fixpoint computation,
it is reasonable to use call dependencies of operations
so that one has to re-check only the more restricted operations
and the operations that use them.
We have implemented this strategy in our tool and
obtained a good improvement compared to the initial
naive fixpoint computation.
For instance, the prelude of Curry (the base module containing
a lot of basic definitions for arithmetic, lists, type classes, etc)
contains 1262 operations (public and also auxiliary operations).
After the first iteration, the call types of 14 operations
are refined so that 17 operations are reanalyzed in the next iteration.
Altogether, the check of the prelude requires five iterations.

\subsection{Extensions}

Up to now, we presented the analysis of a kernel language.
Since application programs use more features,
we discuss in the following how to cover all features
occurring in Curry programs.

\subsubsection{Literals}
Programs might contain numbers or characters which are not introduced
by explicit data definitions.
Although there are conceptually infinitely many literals,
their handling is straightforward.
A literal can be treated as a $0$-ary constructor.
Since there are only finitely many literals in each program,
the abstract types for a given program are also finite.
For instance, consider the operation
\begin{curry}
k 0 = 'a'
k 1 = 'b'
\end{curry}
The call type of \code{k} inferred w.r.t.\ domain $\Ac_1$ is
$\CT(\code{k}) = \{ \code{0}, \code{1} \}$.
Similarly, the in/out type of \code{k} is
$\IO(\code{k}) = \{ \{\code{0}\} \iomapsto \{\code{'a'}\},
                    \{\code{1}\} \iomapsto \{\code{'b'}\} \}$.

\subsubsection{External operations}
Usually, externally defined primitive operations do not fail
so that they have trivial call types.
There are a few exceptions which are handled
by explicitly defined call types,
like the always failing operation \code{failed},
or arithmetic operations like division.

\subsubsection{Higher-order operations}
Since it is seldom that generic higher-order operations
have functional parameters with non-trivial call types,
we take a simple approach to check higher-order operations.
We assume that higher-order arguments have trivial call types
and check this property for each call to a higher-order operation.
Thus, a call like \ccode{map$\;$head$\;$[[1,2],[3,4]]} is considered
as potentially failing.
Our practical evaluation shows this assumption provides reasonable results
in practice.

\subsubsection{Encapsulation}

Failures might occur during run time, either due to operations with
complex non-failure conditions (e.g., arithmetic) or
due to the use of logic programming techniques
with search and failures.
In order to ensure an overall non-failing application
in the presence of possibly failing subcomputations,
the programmer has to encapsulate such subcomputations
and then analyze its outcome, e.g., branching on the result of
the encapsulation.
For this purpose, one can use an exception handler
(which represents a failing computation as an error value)
or some method to encapsulate non-deterministic search (e.g.,
\cite{AntoyHanus09,BrasselHanusHuch04JFLP,LopezSanchez04,Lux99FLOPS}).
For instance, the primitive operation \code{allValues} returns all the values
of its argument expression in a list so that a failure corresponds to
an empty list.
In order to include such a primitive in our framework,
we simply skip the analysis of its arguments.
For instance, a source expression like \ccode{allValues$\;$(head$\;$ys)}
is not transformed it into
\ccode{let x = head$\;$ys in allValues$\;$x} (where \code{x} is fresh),
but it is kept as it is. Furthermore, rule \vrulename{Func}
is specialized for \code{allValues} so that the condition
on the arguments w.r.t.\ the call type is omitted
and the in/out type is trivial,
i.e., $\IO(\code{allValues}) = \{ \top \iomapsto \top \}$.
In a similar way, other methods to encapsulate
possibly non-deterministic and failing operations,
like \emph{set functions} \cite{AntoyHanus09}, can be handled.

\subsubsection{Errors as Failures}

The operation \code{error} is an external operation to emit an error
message and terminate the program (if it does not occur inside
an exception handler).
Since we are mainly interested to avoid internal programming errors,
\code{error} is not considered as a failing operation in the default mode.
Thus, if we change the definition of \code{head} into (as in Haskell)
\begin{curry}
head :: [a] -> a
head []     = error "head: empty list"
head (x:xs) = x
\end{curry}
the inferred call type is $\top$ so that the call \ccode{head$\;$[]}
is not considered as failing.
From some point of view, this is reasonable since the evaluation does not fail
but shows a result---the error message.

However, in safety-critical applications we want to be sure
that all errors are caught.
In this case, we can still use our framework and define the call type
of \code{error} as $\bot$ so that any call to \code{error} is considered
as failing.
Moreover, exception handlers can be treated similarly to encapsulated
search operators as described above.
In order to be flexible with the interpretation of \code{error},
our tool (see below) provides an option to set one of these two views
of \code{error}.

\section{Evaluation}
\label{sec:evaluation}

We have implemented the methods described above in a tool\footnote{%
Available as package
\url{https://cpm.curry-lang.org/pkgs/verify-non-fail-1.0.0.html}}
written in Curry.
In the following we evaluate it by discussing some examples
and applying it to various libraries.

First, we compare our approach to a previous tool
to verify non-failing Curry programs
\cite{Hanus18PPDP}.
In that tool the programmer has to annotate partially defined
operations with \emph{non-fail conditions}.
Based on these conditions, the tool extracts
proof obligations from a program which are sent to an SMT solver.
For instance, consider the operation to compute the last element
of a non-empty list:
\begin{curry}
last [x]      = x
last (_:x:xs) = last (x:xs)
\end{curry}
The condition to express the non-failure of this expression
must be explicitly defined as a predicate on the argument:
\begin{curry}
last'nonfail xs = not (null xs)
\end{curry}
This predicate together with the definition of the involved operations
are translated to SMT formulas and then checked by an SMT solver,
e.g., Z3 \cite{deMouraBjorner08}.
Using our approach, the abstract call type
$\CT(\code{last}) = \{\code{:}\}$ is automatically inferred and
the definition of \code{last} is successfully checked.
Actually, we tested our tool on various libraries and could
deduce almost all manually written non-fail conditions of \cite{Hanus18PPDP}.
Only in four prelude operations, our tool could not infer
these non-fail conditions since they contain arithmetic conditions on integers.
We leave it for future work to combine our approach with an SMT solver
to enable also successful checks in these cases.

Another interesting example is the operation \code{split}
from the library \code{Data.List}.
This operation takes a predicate and a list as arguments
and splits this list into sublists at positions where the predicate
holds. It is defined in Curry as
\begin{curry}
split :: (a -> Bool) -> [a] -> [[a]]
split _ []                 = [[]]
split p (x:xs) | p x       = [] : split p xs
               | otherwise = let (ys:yss) = split p xs
                             in (x:ys):yss
\end{curry}
The interesting point in this example is the pattern matching
in the right-hand side \ccode{let$\;$(ys:yss)$\;$=$\;\cdots$}.
In order to implement this pattern matching in a lazy manner,
specific selector operations are introduced
when this definition is transformed into a kernel language
like FlatCurry, since FlatCurry allows only variable bindings
but not constructor patterns in \code{let} expressions.
Thus, the FlatCurry code generated for this definition introduces
two selector operations (named \code{split\us{}ys} and \code{split\us{}yss}
below) to implement the lazy pattern matching in the \code{let} expression.
The standard front end of Curry implementations translates
the definition above into the following FlatCurry code
(which is bit more relaxed than required in Fig.~\ref{fig:flatcurry}):
\begin{curry}
split :: (a -> Bool) -> [a] -> [[a]]
split p zs = case zs of
  []     -> [] : []
  x : xs -> let px = apply p x
             in case px of
                  True  -> [] : (split p xs)
                  False -> let o = otherwise
                            in case o of
                                 True -> let ts  = split p xs
                                          in let ys  = split_ys ts
                                             in let yss = split_yss ts
                                                in (x : ys) : yss
                                 False -> failed

split_ys :: [[a]] -> [a]
split_ys zs = case zs of x : xs -> x

split_yss :: [[a]] -> [[a]]
split_yss zs = case zs of x : xs -> xs
\end{curry}
\code{apply} is a predefined primitive operation
to implement higher-order application.
The predefined operation \code{otherwise} is equivalent to \code{True}
so that the occurrence of \code{failed} is not reachable.

Note that the selector operations \code{split\us{}ys} and \code{split\us{}yss}
are partially defined (they correspond to \code{head} and \code{tail}).
Since they are generated during the compilation process,
a non-fail condition cannot be explicitly defined in the source program
so that the tool described in \cite{Hanus18PPDP} could not verify
this definition of \code{split}.
As mentioned in \cite{Hanus18PPDP}, it is necessary
to include explicit calls to \code{head} and \code{tail}
instead of the pattern matching of \code{let}.
Moreover, the post-condition
\begin{curry}
split'post p xs ys = not (null ys)
\end{curry}
had to be added and proved by a contract checker \cite{Hanus20FI}
so that this information is used by the verifier to ensure
that the recursive call to \code{split} always returns a non-empty list.

With our method, such manual additions are not required
since the call types of the introduced selector operations
are automatically inferred together with the in/out type
\begin{curry}
$\IO(\code{split}) = \{ \top\cdot\{\code{[]}\} \iomapsto \{\code{:}\},~ \top\cdot\{\code{:} \} \iomapsto \{\code{:}\} \}$
\end{curry}
Thanks to this in/out type, the abstract result of the recursive
call to \code{split} is $\{\code{:}\}$ which matches the call types
required for the selector operations.
Thus, in contrast to \cite{Hanus18PPDP},
no manual annotations or code modifications are necessary
to check the non-failure of \code{split}.

If our tool is applied to a Curry module,
it infers the in/out types and the call types
of all operations defined in this module and then
checks all branches and calls whether they might be failing.
If this is the case, the call types are refined
and the problematic ones are reported to the user.
Then the user can decide to either accept the refined call types
or modify the program code to handle possible failures so that
the call type does not need a refinement.

\begin{table}[t]
\centering
\begin{tabular}{l*{5}{@{~~~}r@{/}l}@{~~~}cc}
\hline
Module
 & \multicolumn{2}{c}{operations}
 & \multicolumn{2}{c}{\begin{tabular}{@{}c@{}}in/out \\ types\end{tabular}}
 & \multicolumn{2}{c}{\begin{tabular}{@{}c@{}}initial \\[-0.5ex] call \\[-0.5ex] types\end{tabular}}
 & \multicolumn{2}{c}{\begin{tabular}{@{}c@{}}final \\[-0.5ex] call \\[-0.5ex] types\end{tabular}}
 & \multicolumn{2}{c}{\begin{tabular}{@{}c@{}}final \\ failing\end{tabular}}
 & \begin{tabular}{@{}c@{}}itera- \\ tions\end{tabular}
 & \begin{tabular}{@{}c@{}}verify \\ time\end{tabular} \\
\hline
\code{Prelude} & 862 & 1262 & 605 & 857 & 24 & 32 & 63 & 71 & 45 & 53 & 5 & 969 \\
\code{Data.Char} & 9 & 9 & 0 & 0 & 0 & 0 & 0 & 0 & 0 & 0 & 1 & 272 \\
\code{Data.Either} & 7 & 11 & 5 & 9 & 2 & 2 & 2 & 2 & 0 & 0 & 1 & 113 \\
\code{Data.List} & 49 & 87 & 39 & 73 & 7 & 15 & 8 & 16 & 1 & 1 & 2 & 290 \\
\code{Data.Maybe} & 8 & 9 & 7 & 8 & 0 & 0 & 0 & 0 & 0 & 0 & 1 & 113 \\
\code{Numeric} & 5 & 7 & 0 & 2 & 0 & 0 & 0 & 0 & 0 & 0 & 1 & 273 \\
\code{System.Console.GetOpt} & 6 & 47 & 5 & 41 & 0 & 0 & 0 & 0 & 0 & 0 & 1 & 287 \\
\code{System.IO} & 32 & 51 & 10 & 12 & 0 & 0 & 0 & 0 & 0 & 0 & 1 & 115 \\
\code{Text.Show} & 4 & 4 & 4 & 4 & 0 & 0 & 0 & 0 & 0 & 0 & 1 & 110 \\
\hline \\ 
\end{tabular}
\caption{Inference of call types for some standard libraries}\label{table:calltypes}
\end{table}

Table~\ref{table:calltypes} contains the results of checking various
Curry libraries with our tool.
The ``operations'' column contains the number of public (exported) operations
and the number of all operations defined in the module.
Similarly, the following three columns shows the information for
public and all operations.
The ``in/out types'' column shows the numbers of non-trivial in/out types.
The initial and final call types are the number of non-trivial call types
computed at the beginning and obtained after some iterations
(the number of iterations is shown in the next to last column).
The ``final failing'' column contains the number of operations
where an empty call type is inferred, i.e., there is no precise
information about the required call types.
The last column shows the verification time in milliseconds.\footnote{%
We measured the verification time on a Linux machine 
running Ubuntu 22.04 with an Intel Core i7-1165G7 (2.80GHz) processor
with eight cores.}

As one can see from this table, even quite complex modules, like the prelude,
have only a few operations with non-trivial call types that need to be checked.
Therefore, the effort to infer and check modules is limited.
The higher number of failing operations in the prelude are
the various arithmetic division operators and enumeration and parsing
operations where a precise call type cannot be inferred.

\section{Related Work}
\label{sec:related}

The exclusion of run-time failures at compile time is a
practically relevant but also challenging issue.
Therefore, there are many approaches targeting it
so that we can only discuss a few of them.
We concentrate on approaches for functional and logic programming,
although there are also many in the imperative world.
As mentioned in the introduction, the exclusion of
dereferencing null pointers is quite relevant there.
As an example from object-oriented programming,
the Eiffel compiler uses appropriate type declarations
and static analysis to ensure that pointer dereference failures
cannot occur in accepted programs \cite{Meyer17}.

In logic programming, there is no common definition of
``non-failing'' due to different interpretations of non-determinism.
Whereas we are interested to exclude any failure in a top-level
computation, other approaches, like \cite{BuenoEtAl04,DebrayEtAl97},
consider a predicate in a logic program as non-failing
if at least one answer is produced.
Similarly to our approach, type abstractions
are used to approximate non-failure properties,
but the concrete methods are different.

Another notion of failing programs in a
dynamically typed programming language is based on success types,
e.g., as used in Erlang \cite{LindahlSagonas06}.
Success types over-approximate possible uses of an operation
so that an empty success type indicates an operation
that never evaluates to some value.
Thus, success types can show definite failures,
whether we are interested in definite non-failures.

Strongly typed programming languages are a reasonable basis
to check run-time failures at compile time, since the type system
already ensures that some kind of failures cannot occur
(``well-typed programs do not go wrong'' \cite{Milner78}).
However, failures due to definitions with partial patterns
are not covered by a standard type system.
Therefore, Mitchell and Runciman developed
a checker for Haskell to verify the absence of pattern-match errors
due to incomplete patterns \cite{MitchellRunciman07,MitchellRunciman08}.
Their checker extracts and solves specific constraints
from pattern-based definitions.
Although these constraints have similarities to the
abstract type domain $\Ac_1$, our approach is generic
w.r.t.\ the abstract type domain so that it can also deal with
more powerful abstract types.

An approach to handle failures caused by restrictions on
number arguments is described in \cite{JhalaMajumdarRybalchenko11}.
It is based on generating (arithmetic) constraints
which are translated into an imperative program
such that the constraints are satisfiable iff the translated program is safe.
This enables the inference of complex conditions on numbers,
but pattern matching with algebraic data types and logic-oriented
subcomputations are not supported.

Another approach to ensure the absence of failures is to make
the type system stronger or more expressive in order to encode non-failing
conditions in the types.
For instance, operations in
dependently typed programming languages, such as
Coq \cite{BertotCasteran04}, Agda \cite{Norell08}, or Idris \cite{Brady13},
must be totally defined, i.e., terminating and non-failing.
Such languages have termination checkers but non-fail conditions
need to be explicitly encoded in the types.
For instance, the definition of the operation \code{head} in
Agda \cite{Norell08} requires, as an additional argument,
a proof that the argument list is not empty.
Thus, \code{head} could have the type signature
\begin{curry}
head : {A : Set} -> (xs : List A) -> is-empty xs == ff -> A
\end{curry}
Therefore, each use of \code{head} must provide,
as an additional argument, an explicit proof
for the non-emptiness of the argument list \code{xs}.
Type-checked Agda programs do not contain run-time failures
but programming in a dependently typed language
is more challenging since the programmer has to construct
non-failure proofs.

Refinement types, as used in
LiquidHaskell \cite{VazouSeidelJhala14,VazouEtAl14},
are another approach to encode non-failing conditions or
more general contracts on the type-level.
Refinement types extend standard types by a predicate
that restricts the set of allowed values.
For instance, the applications of \code{head} to the empty list
can be excluded by the following refinement type \cite{VazouSeidelJhala14}:
\begin{curry}
head :: {xs : [a] | 0 < len xs} -> a
\end{curry}
The correctness of refinement types is checked by an SMT solver
so that they are more expressive than our non-failure conditions.
On the other hand, refinement types must be explicitly added
by the programmer whereas our goal is to infer non-failure conditions
from a given program without specific annotations.
This allows the use of potentially failing operations
in encapsulated subcomputations, which is relevant
to use logic programming techniques.
This aspect is also the motivation for
the non-failure checking tool proposed in \cite{Hanus18PPDP}.
As already discussed in Sect.~\ref{sec:evaluation},
the advantage of our tool is the automatic inference
of non-failing conditions which supports an easier application
to larger programs.

\section{Conclusions}
\label{sec:concl}

In this paper we proposed a new technique and a fully automatic tool
to check declarative programs for the absence of failing computations.
In contrast to other approaches,
our approach does not require the explicit specification
of non-fail conditions but is able to infer them.
In order to provide flexibility with the structure of non-fail conditions,
our approach is generic w.r.t.\ a domain of abstract types
to describe non-fail conditions.
Since we developed our approach for Curry,
it is also applicable to purely functional or logic programs.
Due to the use of Curry, we do not need to abandon
all potentially failing operations.
Partially defined operations and failing evaluations
are still allowed in logic-oriented subcomputations provided
that they are encapsulated in order to control possible failures.

Although the inference of non-fail conditions is based
on a fixpoint iteration and might yield, in the worst case,
an empty (i.e., always failing) condition,
our practical evaluation showed that even larger programs
contain only a few operations with non-trivial non-fail conditions
which are inferred after a small number of iterations.
When a non-trivial non-fail condition is inferred for some operation,
the programmer can either modify the definition of this operation
(e.g., by adding missing case branches)
or control the invocation of this operation by checking its outcome
with some control operator.

For future work, we plan to extend our approach
to built-in types, like integers, and infer non-failure conditions
on such types, like non-negative or positive numbers,
and check them using SMT solvers.
Furthermore, it is interesting to see whether
other abstract domains, e.g., regular types,
yield more precise results in application programs.
Our first experiments with depth-$k$ domains
(see Appendix~\ref{appendix:domains}) showed that
a rather simple domain seems sufficient in practical programs.


\newpage
\appendix

\section{Inference of Call Types with Different Domains}
\label{appendix:domains}

Our inference of call types is parametric w.r.t.\ the domain of abstract types.
In the examples of the paper, a simple domain, where data terms are abstracted
to their top-level constructors, is used.
The actual implementation supports also depth-$k$ abstractions
\cite{SatoTamaki84} with $k>1$.
Although deeper abstractions could provide more precision,
we show by the analysis of standard libraries that this is seldom
in practical programs.
The following tables show the results for $k=1$ (which is identical
to Table~\ref{table:calltypes}), $k=2$, and $k=5$.
Apart from the execution times, the results are identical.

\paragraph{Inference of call types with the depth-1 domain:}

\begin{center}
\begin{tabular}{l*{5}{@{~~~}r@{/}l}@{~~~}cc}
\hline
Module
 & \multicolumn{2}{c}{operations}
 & \multicolumn{2}{c}{\begin{tabular}{@{}c@{}}in/out \\ types\end{tabular}}
 & \multicolumn{2}{c}{\begin{tabular}{@{}c@{}}initial \\[-0.5ex] call \\[-0.5ex] types\end{tabular}}
 & \multicolumn{2}{c}{\begin{tabular}{@{}c@{}}final \\[-0.5ex] call \\[-0.5ex] types\end{tabular}}
 & \multicolumn{2}{c}{\begin{tabular}{@{}c@{}}final \\ failing\end{tabular}}
 & \begin{tabular}{@{}c@{}}itera- \\ tions\end{tabular}
 & \begin{tabular}{@{}c@{}}verify \\ time\end{tabular} \\
\hline
\code{Prelude} & 862 & 1262 & 605 & 857 & 24 & 32 & 63 & 71 & 45 & 53 & 5 & 969 \\
\code{Data.Char} & 9 & 9 & 0 & 0 & 0 & 0 & 0 & 0 & 0 & 0 & 1 & 272 \\
\code{Data.Either} & 7 & 11 & 5 & 9 & 2 & 2 & 2 & 2 & 0 & 0 & 1 & 113 \\
\code{Data.List} & 49 & 87 & 39 & 73 & 7 & 15 & 8 & 16 & 1 & 1 & 2 & 290 \\
\code{Data.Maybe} & 8 & 9 & 7 & 8 & 0 & 0 & 0 & 0 & 0 & 0 & 1 & 113 \\
\code{Numeric} & 5 & 7 & 0 & 2 & 0 & 0 & 0 & 0 & 0 & 0 & 1 & 273 \\
\code{System.Console.GetOpt} & 6 & 47 & 5 & 41 & 0 & 0 & 0 & 0 & 0 & 0 & 1 & 287 \\
\code{System.IO} & 32 & 51 & 10 & 12 & 0 & 0 & 0 & 0 & 0 & 0 & 1 & 115 \\
\code{Text.Show} & 4 & 4 & 4 & 4 & 0 & 0 & 0 & 0 & 0 & 0 & 1 & 110 \\
\hline \\ 
\end{tabular}
\end{center}

\bigskip

\paragraph{Inference of call types with the depth-2 domain:}
\begin{center}
\begin{tabular}{l*{5}{@{~~~}r@{/}l}@{~~~}cc}
\hline
Module
 & \multicolumn{2}{c}{operations}
 & \multicolumn{2}{c}{\begin{tabular}{@{}c@{}}in/out \\ types\end{tabular}}
 & \multicolumn{2}{c}{\begin{tabular}{@{}c@{}}initial \\[-0.5ex] call \\[-0.5ex] types\end{tabular}}
 & \multicolumn{2}{c}{\begin{tabular}{@{}c@{}}final \\[-0.5ex] call \\[-0.5ex] types\end{tabular}}
 & \multicolumn{2}{c}{\begin{tabular}{@{}c@{}}final \\ failing\end{tabular}}
 & \begin{tabular}{@{}c@{}}itera- \\ tions\end{tabular}
 & \begin{tabular}{@{}c@{}}verify \\ time\end{tabular} \\
\hline
\code{Prelude} & 862 & 1262 & 605 & 857 & 24 & 32 & 63 & 71 & 45 & 53 & 5 & 986 \\
\code{Data.Char} & 9 & 9 & 0 & 0 & 0 & 0 & 0 & 0 & 0 & 0 & 1 & 381 \\
\code{Data.Either} & 7 & 11 & 5 & 9 & 2 & 2 & 2 & 2 & 0 & 0 & 1 & 118 \\
\code{Data.List} & 49 & 87 & 39 & 73 & 7 & 15 & 8 & 16 & 1 & 1 & 2 & 405 \\
\code{Data.Maybe} & 8 & 9 & 7 & 8 & 0 & 0 & 0 & 0 & 0 & 0 & 1 & 118 \\
\code{Numeric} & 5 & 7 & 0 & 2 & 0 & 0 & 0 & 0 & 0 & 0 & 1 & 393 \\
\code{System.Console.GetOpt} & 6 & 47 & 5 & 41 & 0 & 0 & 0 & 0 & 0 & 0 & 1 & 425 \\
\code{System.IO} & 32 & 51 & 10 & 12 & 0 & 0 & 0 & 0 & 0 & 0 & 1 & 126 \\
\code{Text.Show} & 4 & 4 & 4 & 4 & 0 & 0 & 0 & 0 & 0 & 0 & 1 & 117 \\
\hline \\ 
\end{tabular}
\end{center}

\bigskip
\bigskip

\paragraph{Inference of call types with the depth-5 domain:}
\begin{center}
\begin{tabular}{l*{5}{@{~~~}r@{/}l}@{~~~}cc}
\hline
Module
 & \multicolumn{2}{c}{operations}
 & \multicolumn{2}{c}{\begin{tabular}{@{}c@{}}in/out \\ types\end{tabular}}
 & \multicolumn{2}{c}{\begin{tabular}{@{}c@{}}initial \\[-0.5ex] call \\[-0.5ex] types\end{tabular}}
 & \multicolumn{2}{c}{\begin{tabular}{@{}c@{}}final \\[-0.5ex] call \\[-0.5ex] types\end{tabular}}
 & \multicolumn{2}{c}{\begin{tabular}{@{}c@{}}final \\ failing\end{tabular}}
 & \begin{tabular}{@{}c@{}}itera- \\ tions\end{tabular}
 & \begin{tabular}{@{}c@{}}verify \\ time\end{tabular} \\
\hline
\code{Prelude} & 862 & 1262 & 605 & 857 & 24 & 32 & 63 & 71 & 45 & 53 & 5 & 987 \\
\code{Data.Char} & 9 & 9 & 0 & 0 & 0 & 0 & 0 & 0 & 0 & 0 & 1 & 431 \\
\code{Data.Either} & 7 & 11 & 5 & 9 & 2 & 2 & 2 & 2 & 0 & 0 & 1 & 120 \\
\code{Data.List} & 49 & 87 & 39 & 73 & 7 & 15 & 8 & 16 & 1 & 1 & 2 & 438 \\
\code{Data.Maybe} & 8 & 9 & 7 & 8 & 0 & 0 & 0 & 0 & 0 & 0 & 1 & 119 \\
\code{Numeric} & 5 & 7 & 0 & 2 & 0 & 0 & 0 & 0 & 0 & 0 & 1 & 424 \\
\code{System.Console.GetOpt} & 6 & 47 & 5 & 41 & 0 & 0 & 0 & 0 & 0 & 0 & 1 & 462 \\
\code{System.IO} & 32 & 51 & 10 & 12 & 0 & 0 & 0 & 0 & 0 & 0 & 1 & 127 \\
\code{Text.Show} & 4 & 4 & 4 & 4 & 0 & 0 & 0 & 0 & 0 & 0 & 1 & 117 \\
\hline \\ 
\end{tabular}
\end{center}

\newpage

\section{Correctness}

In the following, we show that inference systems defined in the paper
correctly approximate the evaluation of functional logic programs.
Similarly to the inference systems,
we consider the evaluation of FlatCurry programs
(see Fig.~\ref{fig:flatcurry}).
Function definitions are considered as rewrite rules and
do not contain unbound variables, i.e., all variables
occurring in the right-hand side are either parameters or
introduced by $\mathit{let}$ bindings or $\mathit{case}$ patterns.
FlatCurry expressions are evaluated by term rewriting
with the following specific evaluation rules:
\begin{itemize}
\item
A disjunction $e_1~\mathit{or}~e_2$ is non-deterministically evaluated,
i.e., reduced to either $e_1$ or $e_2$.
Thus, our reduction relation is non-deterministic so that
all possible evaluations are considered.
\item
The evaluation of $\mathit{let}~x_1,\ldots,x_n ~\mathit{free~in}~ e'$
guesses values (data terms) $t_1,\ldots,t_n$
for the free variables $x_1,\ldots,x_n$ so that it reduces to
$\rho(e')$ where $\rho = \{ x_1 \mapsto t_1,\ldots, x_n \mapsto t_n \}$.
Actual implementations defer this guessing to the point where
a value of a variable is needed, which is called narrowing.
Since we do not consider a specific operational semantics,
we use this quite general (non-deterministic) guessing step.
\item
A let binding $\mathit{let}~ x = e~ \mathit{in}~e'$
is evaluated by evaluating $e$ to a data term $t$\footnote{
In order to model non-strict evaluation, one could add
partial terms, i.e., an undefined value, and the possibility
to reduce a term to an undefined value, as in the
rewriting logic CRWL \cite{GonzalezEtAl99}.
For the sake of simplicity, we omit the introduction of partial terms.
}
and evaluating $\rho(e')$ where $\rho = \{ x \mapsto t \}$.
\item
A case expression
$\mathit{case}~x~\mathit{of}~\{p_1\to e_1; \ldots; p_n \to e_n\}$
is evaluated by reducing it to $\rho(e_i)$ where $x$ matches the pattern $p_i$
and $\rho$ binds the variables of $p_i$.
Note that the case expression covers all constructors of a type
so that such a reduction step is always possible.
\end{itemize}
In the following, we denote by
$\var{e}$ the set of all unbound variables of an expression $e$,
i.e., all variables occurring in $e$ which are not introduced
in patterns or let expressions.

\subsection{Correctness of In/Out Types}

First we show that the inference rules in Fig.~\ref{fig:inouttypes}
correctly approximate the input-output behavior of operations.
For this purpose, we denote by
 $\dom{\Gamma}$ the domain of a type environment $\Gamma$
(a mapping from variables into abstract types).
We state the correctness of the inference rules as follows.

\begin{theorem}
\label{theo:inout-rules-are-correct}
Let $\var{e} \subseteq \dom{\Gamma}$,
$\Gamma \vdash e : \{ \seq{\Gamma_k \iomapsto a_k} \}$
be derivable by the inference rules in Fig.~\ref{fig:inouttypes},
and $\sigma$ be a substitution with
$\sigma(x) \in \gamma(\Gamma(x))$ for all $x \in \var{e}$.
If $\sigma(e)$ is reducible to some data term $t$,
then there is some $i \in \{1,\ldots,k\}$ with
$t \in \gamma(a_i)$ and
$\sigma(x) \in \gamma(\Gamma_i(x))$ for all $x \in \var{e}$.
\end{theorem}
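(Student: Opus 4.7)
The natural approach is structural induction on the derivation of $\Gamma \vdash e : \{ \seq{\Gamma_k \iomapsto a_k} \}$, with a case split on the inference rule applied at the root. The induction hypothesis gives the theorem for smaller derivations, and I rely on three things throughout: (i) the stated contract of the abstract constructor operator, namely $c(t_1,\ldots,t_n) \in \gamma(c^{\alpha}(a_1,\ldots,a_n))$ whenever $t_j \in \gamma(a_j)$; (ii) the assumed correctness of $R$, namely that every value produced by $f(\ldots)$ lies in $\gamma(R(f))$; and (iii) the informal reduction semantics sketched just before the theorem. A side-invariant worth noting at the outset is $\var{e} \subseteq \dom{\Gamma}$, which transfers to each premise since the rules only extend $\Gamma$ with exactly the new variables bound in the corresponding subexpression.

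The base rules are essentially immediate. For \rulename{Var}, the environment is unchanged and $\sigma(x) \in \gamma(\Gamma(x))$ is already given. For \rulename{Cons}, we have $t = \sigma(e) = c(\sigma(x_1),\ldots,\sigma(x_n))$ and the contract of $c^{\alpha}$ applied to the given memberships $\sigma(x_j) \in \gamma(\Gamma(x_j))$ yields $t \in \gamma(c^{\alpha}(\Gamma(x_1),\ldots,\Gamma(x_n)))$. For \rulename{Func}, the conclusion is exactly the assumed correctness of $R$. The rule \rulename{Or} splits on which disjunct the non-deterministic semantics selects and applies the IH to that branch, observing that both $io_1$ and $io_2$ are included in the merged $io_1 \cup io_2$, so the witness $i$ transports.

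The interesting cases are \rulename{Free}, \rulename{Let}, and \rulename{Case}, because each extends both $\sigma$ and $\Gamma$. For \rulename{Free}, the semantics guesses data terms $\seq{v_n}$ for the free variables, and I extend $\sigma$ to $\sigma' = \sigma[\seq{x_n \mapsto v_n}]$; since $v_j \in \gamma(\top)$ trivially, the hypotheses of the IH at $\Gamma[\seq{x_n \mapsto \top}]$ are satisfied and the witness returned by the IH is also a witness for the outer derivation. \rulename{Let} is analogous: the semantics first reduces $\sigma(e)$ to a data term $v$, and I apply the IH to the premise $\Gamma[x \mapsto \top] \vdash e' : io$ using $\sigma[x \mapsto v]$. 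For \rulename{Case}, the key observation is that when $\sigma(\mathit{case}\,x\,\mathit{of}\,\ldots)$ reduces to $t$, the value $\sigma(x)$ must be of the form $c_j(\seq{v_{n_j}})$ matching exactly one pattern $p_j$, and the reduct is $\sigma'(e_j)$ where $\sigma'$ additionally binds the pattern variables of $p_j$ to $\seq{v_{n_j}}$. The contract of $c_j^{\alpha}$ then yields $\sigma'(x) = c_j(\seq{v_{n_j}}) \in \gamma(c_j^{\alpha}(\seq{\top}))$; the pattern variables satisfy $\sigma'(x_{n_j}^{k}) = v_k \in \gamma(\top)$ trivially; and all other variables' memberships are inherited. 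The IH applied to the $j$-th premise $\Gamma_j \vdash e_j : io_j$ delivers the required index within $io_j \subseteq io_1 \cup \cdots \cup io_n$.

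I expect the main obstacle to be nothing deep but the precise bookkeeping in \rulename{Case}: one must track how the refined $\Gamma_j$ is simultaneously compatible with the prior membership $\sigma(x) \in \gamma(\Gamma(x))$ (which is automatic, since the branch actually taken must correspond to the head constructor of $\sigma(x)$) and with the fresh top types assigned to pattern variables, and then exhibit an explicit index $i \in \{1,\ldots,k\}$ for the overall in/out set that arises from the appropriate component of $io_j$. The same care is needed for \rulename{Free} and \rulename{Let}, where one must verify that the extended substitution still agrees with the unchanged entries of $\Gamma$ on all previously bound variables. Everything else reduces to the contract of $c^{\alpha}$ and the fact that $\gamma(\top)$ contains all data terms.
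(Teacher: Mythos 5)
Your proposal matches the paper's proof essentially step for step: induction on the derivation (the paper phrases it as induction on the height of the proof tree), with the same case analysis on the last rule applied, the same reliance on the contract of $c^{\alpha}$ and the assumed correctness of $R$ in the base cases, and the same handling of the extended substitutions in \rulename{Free}, \rulename{Let}, and \rulename{Case}. The one detail the paper makes explicit that you leave implicit is, in the \rulename{Or} and \rulename{Case} cases, justifying $\sigma(y)\in\gamma(\Gamma'_j(y))$ for variables $y$ of $e$ that do not occur in the selected subexpression --- which the paper settles by observing that a derivation only ever refines the environment entry of a case discriminator --- but this falls squarely under the ``precise bookkeeping'' you already flag as the remaining work.
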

\begin{proof}
By induction on the height of the proof tree to derive
$\Gamma \vdash e : \io$.
We assume that the theorem's preconditions hold, i.e.,
$\var{e} \subseteq \dom{\Gamma}$,
$\Gamma \vdash e : \io$ with $\io = \{ \seq{\Gamma_k \iomapsto a_k} \}$
is derivable with an proof tree of height $h$,
$\sigma$ is a substitution with
$\sigma(y) \in \gamma(\Gamma(y))$ for all $y \in \var{e}$,
and $\sigma(e)$ is reducible to some data term $t$.
We have to show that there is some $i \in \{1,\ldots,k\}$ with
\begin{itemize}
\item[(1)] $t \in \gamma(a_i)$ and
\item[(2)] $\sigma(y) \in \gamma(\Gamma_i(y))$ for all $y \in \var{e}$.
\end{itemize}
Induction base $h=1$:
We distinguish the different axioms, i.e.,
the rules \rulename{Var}, \rulename{Cons}, and \rulename{Func}:

\begin{description}
\item[\textit{Rule \rulename{Var} is applied:}]
Then $e = x$, $k=1$, $\Gamma_1 = \Gamma$, $a_1 = \Gamma(x)$,
and $t = \sigma(x)$ since $\sigma(x)$ is a data term.
Since $\sigma(x) \in \gamma(\Gamma(x))$, properties (1) and (2) hold.

\item[\textit{Rule \rulename{Cons} is applied:}]
In this case we have $e = c(x_1,\ldots,x_n)$,
$k=1$, $\Gamma_1 = \Gamma$, and
$a_1 = c^{\alpha}(\Gamma(x_1),\ldots,\Gamma(x_n))$.
$\sigma(x_i) \in \gamma(\Gamma(x_i)) = \gamma(\Gamma_1(x_i))$
($i=1,\ldots,n$) so that (2) holds.
Since $\sigma(e)$ is a data term,
$t = \sigma(e) = c(\sigma(x_1),\ldots,\sigma(x_n))$.
This implies property (1)
since $\sigma(x_i) \in \gamma(\Gamma(x_i))$ ($i=1,\ldots,n$) and
$t = c(\sigma(x_1),\ldots,\sigma(x_n)) \in \gamma(c^{\alpha}(\Gamma(x_1),\ldots,\Gamma(x_n)))$
by definition of the abstract constructor application
(see Sect.~\ref{sec:atypes}).

\item[\textit{Rule \rulename{Func} is applied:}]
Then $e = f(x_1,\ldots,x_n)$,
$k=1$, $\Gamma_1 = \Gamma$, and $a_1 = R(f)$.
Hence $\sigma(x_i) \in \gamma(\Gamma(x_i)) = \gamma(\Gamma_1(x_i))$
($i=1,\ldots,n$) so that (2) holds.
Furthermore, $\sigma(e) = f(t_1,\ldots,t_n)$ for some
terms $t_1,\ldots,t_n$.
Since $R(f)$ approximates all possible values of a call to $f$,
$t \in \gamma(R(f))$.
\end{description}
Induction step, i.e., $h>1$:
In this case, one of the rules \rulename{Or}, \rulename{Free},
\rulename{Let}, or \rulename{Case} is applied:

\begin{description}
\item[\textit{Rule \rulename{Or} is applied:}]
Then $e = e_1~\mathit{or}~e_2$ so that either $\sigma(e_1)$ or $\sigma(e_2)$
is reducible to $t$.
Consider the case that $\sigma(e_1)$ is reducible to $t$
(the other case can be similarly proved).
Since $\Gamma \vdash e_1 : \io_1$ with
$\io_1 = \{ \seq{\Gamma'_{k'} \iomapsto a'_{k'}} \}$
is derivable and $\sigma(e_1)$ is reducible to $t$,
by induction hypothesis,
there is some $j \in \{1,\ldots,k'\}$ with
$\sigma(x) \in \gamma(\Gamma'_j(x))$ for all $x \in \var{e_1}$
and $t \in \gamma(a'_j)$.
If there is some variable $y \in \var{e_2}$ and $y \not\in \var{e_1}$,
$\sigma(y) \in \gamma(\Gamma(y)) = \gamma(\Gamma'_j(y))$,
since an abstract value in a type environment is only changed
if this variable occurs in a case expression as a discriminator
(see the rules in Fig.~\ref{fig:inouttypes})
which cannot be the case for $y$ in expression $e_1$.
Since $\io_1 \subseteq \io$, properties (1) and (2) hold.

\item[\textit{Rule \rulename{Free} is applied:}]
A reduction of the expression
$e = \mathit{let}~x_1,\ldots,x_n ~\mathit{free~in}~ e'$ guesses values
for the free variables $x_1,\ldots,x_n$ in order to reduce $e'$.
Since $\sigma(e)$ reduces to $t$, there must be a substitution
$\rho = \{ \seq{x_n \mapsto t_n} \}$ for the free variables such that
$\sigma(e)$ is reduced to $\sigma(\rho(e'))$
which is reducible to $t$.
$\var{e'} \subseteq \dom{\Gamma[\seq{x_n \mapsto \top}]}$
since $\var{e} \subseteq \dom{\Gamma}$.
Hence, by the induction hypothesis,
there is some $i \in \{1,\ldots,k\}$ with
$\sigma(\rho(y)) \in \gamma(\Gamma_i(y))$ for all $y \in \var{e'}$
and $t \in \gamma(a_i)$.
Thus, property (1) holds.
Since $\rho$ is the identity on variables different from $x_1,\ldots,x_n$,
$\sigma(y) \in \gamma(\Gamma_i(y))$ for all $y \in \var{e}$
so that property (2) also holds.

\item[\textit{Rule \rulename{Let} is applied:}]
This is similar to rule \rulename{Free} since this rule
introduces the bound variable without any restriction on its value
so that we can ignore the evaluation of the bound expression.

\item[\textit{Rule \rulename{Case} is applied:}]
Then $e = \mathit{case}~x~\mathit{of}~\{p_1\to e_1; \ldots; p_n \to e_n\}$.
Since $\sigma(e)$ is reducible to $t$, there is some branch
$i \in \{1\ldots,n\}$ such that $\sigma(\rho(e_i))$
is reducible to $t$, where $\rho$ is a matching substitution on the
variables of pattern $p_i = c_i(\seq{x_{n_i}})$.
Let $\Gamma_i = \Gamma[x \mapsto c_i^{\alpha}(\seq{\top}), \seq{x_{n_i} \mapsto \top}]$.
Then $\var{e_i} \subseteq \dom{\Gamma_i}$,
$\Gamma_i \vdash e_i : \io_i$ is derivable where the height of
the proof tree is smaller than $h$,
and $\io_i =  \{ \seq{\Gamma'_{k'} \iomapsto a'_{k'}} \} \subseteq \io$.
By the induction hypothesis,
there is some $j \in \{1,\ldots,k'\}$ with
$\sigma(y) \in \gamma(\Gamma'_j(y))$ for all $y \in \var{e_i}$
and $t \in \gamma(a'_j)$.
This immediately implies property (1).
To show property (2),
consider the discriminating variable $x$ of the case expression,
which might not occur in $\var{e_i}$.
In this case, $\Gamma'_j(x) = \Gamma_i(x)$
since the abstract type of $x$ is not changed in the proof tree
for $\Gamma_i \vdash e_i : \io_i$ (compare proof of rule \rulename{Or}).
Since branch $i$ has been selected for the reduction of $\sigma(e)$,
$\sigma(x)$ must match the pattern $c_i(\seq{x_{n_i}})$ so that
$\sigma(x) \in \gamma(c_i^{\alpha}(\seq{\top})) = \gamma(\Gamma_i(x))$.
Finally, if there is some variable $y \in \var{e}$ with
$y \not\in \var{e_i}$,
then $\sigma(y) \in \gamma(\Gamma(y)) = \gamma(\Gamma'_j(y))$
since the abstract type of $y$ is not changed in the proof tree
for $\Gamma_i \vdash e_i : \io_i$ (compare proof of rule \rulename{Or}).
Therefore, $\sigma(y) \in \gamma(\Gamma'_j(y))$ holds for all
$y \in \var{e}$, which proves property (2).
\qed
\end{description}
\end{proof}
For an operation $f$ defined by $f(x_1,\ldots,x_n) = e$,
we infer its in/out type $\io$ by deriving the judgement
\[
\{ \seq{x_n \mapsto \top} \} \vdash e : \{ \seq{\Gamma_k \iomapsto a_k} \}
\]
and defining
\[
\io = \{ \Gamma_i(x_1) \cdots \Gamma_i(x_n) \iomapsto a_i \mid i=1,\ldots,k \}
\]
Hence, Theorem~\ref{theo:inout-rules-are-correct} implies
the following property of inferred in/out types.
\begin{corollary}[Correctness of inferred in/out types]
\label{cor:inout-rules-are-correct}
If
\[
\{ a_{i1} \cdots a_{in} \iomapsto a_i \mid i=1,\ldots,k \}
\]
is an inferred in/out type for an $n$-ary operation $f$
and $t_1,\ldots,t_n,t$ are data terms such that
$f(t_1,\ldots,t_n)$ is reducible to $t$,
then there is some $i \in \{1,\ldots,k\}$ with
$t_j \in \gamma(a_{ij})$ ($j = 1,\ldots,n$) and $t \in \gamma(a_i)$.
\end{corollary}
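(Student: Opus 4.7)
The plan is to reduce the corollary directly to Theorem~\ref{theo:inout-rules-are-correct} by applying the theorem to the right-hand side of $f$'s defining rule with a canonical substitution.

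First I would unfold the definition of the inferred in/out type. By the construction described just before the corollary, if $f$ is defined by $f(x_1,\ldots,x_n) = e$, then the inferred in/out type is obtained from a derivation
\[
\{ x_1 \mapsto \top, \ldots, x_n \mapsto \top \} \vdash e : \{\Gamma_1 \iomapsto a_1, \ldots, \Gamma_k \iomapsto a_k \},
\]
with $a_{ij} = \Gamma_i(x_j)$. Next I would set $\Gamma = \{ x_j \mapsto \top \mid j=1,\ldots,n\}$ and define the substitution $\sigma = \{ x_j \mapsto t_j \mid j=1,\ldots,n\}$. Since every data term lies in $\gamma(\top)$, the hypothesis $\sigma(x_j) \in \gamma(\Gamma(x_j))$ of Theorem~\ref{theo:inout-rules-are-correct} is satisfied. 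We also have $\var{e} \subseteq \{x_1,\ldots,x_n\} = \dom{\Gamma}$, because FlatCurry right-hand sides may only contain parameters or locally introduced variables, and the latter are not unbound in $e$.

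The crucial observation is that, by the operational semantics of FlatCurry rewriting, a reduction $f(t_1,\ldots,t_n) \to^\ast t$ whose first step applies the defining rule of $f$ produces $\sigma(e)$ as the immediate reduct, and the remaining steps reduce $\sigma(e)$ to $t$. Thus $\sigma(e)$ is reducible to the data term $t$, so Theorem~\ref{theo:inout-rules-are-correct} applies and yields some index $i \in \{1,\ldots,k\}$ with $t \in \gamma(a_i)$ and $\sigma(x_j) \in \gamma(\Gamma_i(x_j))$ for all $j$. Substituting back gives $t_j = \sigma(x_j) \in \gamma(\Gamma_i(x_j)) = \gamma(a_{ij})$, which is exactly what the corollary claims.

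The only nontrivial point is the invocation of the operational semantics to justify passing from the reducibility of $f(t_1,\ldots,t_n)$ to that of $\sigma(e)$. Since the only rule whose left-hand side can match $f(t_1,\ldots,t_n)$ is the single defining rule of $f$, and the parameters $x_j$ must bind to the data terms $t_j$, this step is routine. Everything else is bookkeeping, so the corollary follows essentially immediately from the theorem.
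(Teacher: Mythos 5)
Your proposal is correct and follows exactly the route the paper intends: the paper gives no separate proof for this corollary, asserting only that it is an immediate consequence of Theorem~\ref{theo:inout-rules-are-correct} together with the definition of inferred in/out types, and your argument simply fills in the (correct) details of that instantiation. The substitution $\sigma = \{x_j \mapsto t_j\}$, the trivial satisfaction of the $\gamma(\top)$ precondition, and the passage from the reducibility of $f(t_1,\ldots,t_n)$ to that of $\sigma(e)$ are precisely the bookkeeping the paper leaves implicit.
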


\subsection{Correctness of Call Types}
\label{appendix:calltype-correctness}

In the following we show that the inference rules
in Fig.~\ref{fig:calltypecheck} correctly approximate the
non-failure property of operations.

As described in Sect.~\ref{sec:func-call-type-checking},
if an operation $f$ is defined by $f(x_1,\ldots,x_n) = e$,
an assumed call type $\CT(f) = a_1 \ldots a_n$
is checked by deriving the judgement
\[
\{ (x_1, \{ \epsilon \mapsto a_1 \}, \epsilon), \ldots,
   (x_n, \{ \epsilon \mapsto a_n \}, \epsilon) \}, z = e \vdash \Delta
\]
for some fresh variable $z$.
If this check is successful, i.e., the judgement is derivable,
we say that $\CT(f)$ is \emph{verified}.
Furthermore, a (variable-free) expression $e$ is \emph{non-failing}
if all finite reductions of $e$ end in a data term (i.e., containing
only data constructors).
In the following we show that function calls are non-failing
for arguments satisfying $\CT$ if the call types of all functions
specified by $\CT$ are verified.

Recall that we write $\Delta(x)$ for the least upper bound
of all abstract type information about variable $x$ available
in the set of variable types $\Delta$, which is defined by
\[
\Delta(x) = \bigsqcup\; \{ a \mid
 (x, \io, x_1 \ldots x_n) \in \Delta,~ a_1 \ldots a_n \iomapsto a \in \io \}
\]
The inference rules derive sets of variable types $\Delta$
which should be satisfied by substitutions occurring in concrete
evaluation.
For this purpose, we define the \emph{domain} of a set of variable types
$\Delta$ by
\[
\dom{\Delta} = \{ x \mid (x, \io, x_1 \ldots x_n) \in \Delta \}
\]
We say that a set of variable types $\Delta$ is \emph{correct}
for a substitution $\sigma$
if, for all variables $x \in \dom{\Delta}$,
there is some $(x, \io, x_1 \ldots x_n) \in \Delta$
and some $a_1 \ldots a_n \iomapsto a \in \io$
such that $\sigma(x) \in \gamma(a)$ and
$\sigma(x_i) \in \gamma(a_i)$ for $i=1,\ldots,n$.

Thus, a set of variables types is correct for a substitution
if there is \emph{some} in/out type conform to the substitution.
This weakness is a consequence of our approximation
of non-deterministic computations.
Nevertheless, variables types can be helpful to deduce definite
information if they contain only a single element for some variable
or all elements have the same result type.

The following lemmas state simple consequences of this definition.
\begin{lemma}
\label{lemma:correct-variable-value}
If a set of variable types $\Delta$ is correct for a substitution $\sigma$,
then $\sigma(x) \in \gamma(\Delta(x))$ for all variables $x \in \dom{\Delta}$.
\end{lemma}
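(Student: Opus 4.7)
The plan is to fix an arbitrary variable $x \in \dom{\Delta}$ and directly unwind the two definitions in play. By the assumption that $\Delta$ is correct for $\sigma$, we obtain a witness triple $(x, \io, x_1 \ldots x_n) \in \Delta$ together with a specific element $a_1 \ldots a_n \iomapsto a \in \io$ such that $\sigma(x) \in \gamma(a)$ (the conditions $\sigma(x_i) \in \gamma(a_i)$ are not needed for this particular lemma, only the bound on $\sigma(x)$ matters).

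Next, I would observe that by the very definition
\[
\Delta(x) = \bigsqcup\; \{ a' \mid
 (x, \io', x_1' \ldots x_m') \in \Delta,~ a_1' \ldots a_m' \iomapsto a' \in \io' \},
\]
the particular $a$ obtained above is one of the elements of the set being joined, and hence $a \sqsubseteq \Delta(x)$. The final step is to appeal to monotonicity of the concretization function $\gamma$ with respect to $\sqsubseteq$: since $a \sqsubseteq \Delta(x)$, we have $\gamma(a) \subseteq \gamma(\Delta(x))$, and therefore $\sigma(x) \in \gamma(a) \subseteq \gamma(\Delta(x))$, as required.

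The only non-trivial ingredient is the monotonicity of $\gamma$ with respect to the lattice ordering on $\Ac$, which is a standard property of concretization functions in the abstract interpretation framework of~\cite{CousotCousot77} that the paper invokes; for the concrete domain $\Ac_1$ used in the examples it can be verified directly from the definition of $\gamma$ given in Sect.~\ref{sec:atypes} (any $D \subseteq D'$ yields $\gamma(D) \subseteq \gamma(D')$, and $\gamma(a) \subseteq \gamma(\top)$ trivially). I expect this to be the main (but mild) obstacle, since the correctness of the overall framework relies on this assumption being part of the interface of an admissible abstract domain $\Ac$. With monotonicity in hand, the lemma reduces to a one-line chain of inclusions and no case analysis on the structure of $\Delta$ or the in/out types is required.
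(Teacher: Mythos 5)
Your proof is correct and follows essentially the same route as the paper's: extract the witness triple from the correctness assumption, note that the resulting $a$ is one of the joined elements so $a \sqsubseteq \Delta(x)$, and conclude via $\gamma(a) \subseteq \gamma(\Delta(x))$. The only difference is that you make explicit the monotonicity of $\gamma$, which the paper uses tacitly; that is a reasonable clarification but not a different argument.
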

\begin{proof}
If $\Delta$ is correct for $\sigma$
and $x \in \dom{\Delta}$,
then there is some $(x, \io, x_1 \ldots x_n) \in \Delta$
and some $a_1 \ldots a_n \iomapsto a \in \io$ with $\sigma(x) \in \gamma(a)$.
Hence, by definition of $\Delta(x)$, $a \sqsubseteq \Delta(x)$
so that $\sigma(x) \in \gamma(a) \subseteq \gamma(\Delta(x))$.
\qed
\end{proof}
\begin{lemma}
\label{lemma:single-variable-correct}
If $\Delta = \{ (x, \{ \epsilon \mapsto a \}, \epsilon) \}$
and $\sigma(x) \in \gamma(a)$, then $\Delta$ is correct for $\sigma$.
\end{lemma}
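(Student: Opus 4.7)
The plan is to prove this by direct unfolding of the definition of correctness. Given $\Delta = \{ (x, \{ \epsilon \iomapsto a \}, \epsilon) \}$, we have $\dom{\Delta} = \{x\}$, so we only need to verify the correctness condition for the single variable $x$.

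First, I would pick the unique triple $(x, \io, \seq{y_n}) \in \Delta$, namely the one with $\io = \{ \epsilon \iomapsto a \}$ and empty variable sequence (so $n = 0$). Then I would choose the unique element $\epsilon \iomapsto a \in \io$. The required condition from the definition of correctness is that $\sigma(x) \in \gamma(a)$ and $\sigma(y_i) \in \gamma(a_i)$ for all $i = 1, \ldots, n$. The first part holds by the assumption of the lemma. The second part is vacuous since $n = 0$, so there are no $y_i$ to check.

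Since both conjuncts of the correctness condition are satisfied for the only variable in $\dom{\Delta}$, we conclude that $\Delta$ is correct for $\sigma$. There is no real obstacle here: the statement is essentially a sanity check that the abbreviation $x :: a$ (introduced as shorthand for $(x, \{ \epsilon \iomapsto a \}, \epsilon)$) behaves as intended under the general definition of correctness, and the proof is a one-step verification. The lemma will be used later, presumably as a base case when establishing that the initial set $\{ x_1 :: a_1, \ldots, x_n :: a_n \}$ is correct for substitutions mapping each $x_i$ into $\gamma(a_i)$.
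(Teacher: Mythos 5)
Your proof is correct and matches the paper's, which simply states the lemma is an immediate consequence of the definition of correctness; you have merely spelled out the one-step unfolding (unique triple, unique in/out element, vacuous condition on the empty argument sequence) that the paper leaves implicit.
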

\begin{proof}
This is an immediate consequence of the definition above.
\qed
\end{proof}
Now we can state the correctness of the inference rules
in Fig.~\ref{fig:calltypecheck} as follows.

\begin{theorem}
\label{theo:calltypecheck-rules-are-correct}
Assume that the call types of all functions specified by $\CT$ are verified.
Let $\var{e} \subseteq \dom{\Delta}$, $z \not\in \var{e}$,
$\Delta, z = e \vdash \Delta'$ be derivable by the inference rules
in Fig.~\ref{fig:calltypecheck},
and $\sigma$ be a substitution such that $\Delta$ is correct for $\sigma$.
Then all finite reductions of $\sigma(e)$ end
in some data term $t$ with $t \in \gamma(\Delta'(z))$
and $\Delta'$ is correct for $\sigma$.
\end{theorem}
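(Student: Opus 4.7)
\smallskip
\noindent\textbf{Proof proposal.}
The plan is to proceed by well-founded induction on the length $n$ of the given finite reduction $\sigma(e) \to^* t$, with an inner case analysis on the last inference rule used to derive $\Delta, z = e \vdash \Delta'$. The two conclusions (the reduction ends in a data term with $t\in\gamma(\Delta'(z))$, and $\Delta'$ is correct for $\sigma$ suitably extended by $z\mapsto t$) are proved simultaneously, so that the $\Delta'$ produced by each subderivation can be reused as a correct set of variable types for the enclosing rule.

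For the axiom rules \vrulename{Var} and \vrulename{Cons}, $\sigma(e)$ is already a data term so $n=0$; the conclusion follows from Lemmas~\ref{lemma:correct-variable-value} and~\ref{lemma:single-variable-correct}, using the defining property of $c^{\alpha}$ from Sect.~\ref{sec:atypes} in the \vrulename{Cons} case. For \vrulename{Or}, the first reduction step picks one disjunct; the outer IH applies to the shorter remaining reduction, and correctness of $\Delta_1\cup\Delta_2$ for $\sigma$ follows because it already subsumes the returned $\Delta_i$. For \vrulename{Free}, the operational semantics guesses data terms $t_1,\dots,t_n$ for the free variables; extend $\sigma$ to $\sigma'$ by these bindings, observe that $\Delta \cup \{x_1::\top,\dots,x_n::\top\}$ is correct for $\sigma'$ (since $t_i\in\gamma(\top)$), then invoke the IH. For \vrulename{Let}, split the reduction into the reduction of $e$ to a data term $u$ followed by the reduction of $\sigma[x\mapsto u](e')$; apply the IH twice, feeding the $\Delta'$ returned from the first invocation into the context of the second. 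For \vrulename{Case}, the first step picks the branch whose pattern $p_i = c_i(\overline{x_{n_i}})$ matches $\sigma(x)$; one must verify that this branch is reachable, i.e.\ that $\Delta_i(x)\neq\bot$, which holds because $\Delta$ is correct for $\sigma$ so some in/out entry for $x$ already contains $c_i$ in its result component, and the refinement $\Delta\wedge[x\mapsto c_i]$ therefore remains consistent; the outer IH then applies to the selected branch with the extended substitution binding the pattern variables.

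The main obstacle is the \vrulename{Func} case, and it is where the assumption that \emph{all} call types in $\CT$ are verified is essential. For $e = f(x_1,\dots,x_n)$ the premise of \vrulename{Func} gives $\Delta(x_i) \sqsubseteq a_i$, so Lemma~\ref{lemma:correct-variable-value} yields $\sigma(x_i)\in\gamma(a_i)$, i.e.\ the actual arguments satisfy the call type of $f$. Because $\CT(f)$ is verified, there is a derivation
\[
\{x_1::a_1,\dots,x_n::a_n\},\, z' = e_f \vdash \Delta_f
\]
for the body $e_f$ of $f$. The reduction of $\sigma(e)$ begins by unfolding this definition, yielding a reduction of the body of strictly smaller length $n-1$; the singleton sets $\{x_i::a_i\}$ are correct for the restriction of $\sigma$ to the $x_i$ by Lemma~\ref{lemma:single-variable-correct}, so the outer IH applies and gives that this reduction ends in a data term $t$ with $t\in\gamma(\Delta_f(z'))$. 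The returned $\Delta'=\{\iosubst{z}{\IO(f)}{x_1\ldots x_n}\}$ is then shown correct for $\sigma\cup\{z\mapsto t\}$ by Corollary~\ref{cor:inout-rules-are-correct} applied to $\IO(f)$: the corollary supplies an entry $a'_1\cdots a'_n \iomapsto a'$ with $\sigma(x_i)\in\gamma(a'_i)$ and $t\in\gamma(a')$, which is exactly the witness required by the definition of correctness of a variable-type set. A minor secondary subtlety, worth flagging, is the need to check in the compound cases that variables which drop out of $\dom{\Delta'}$ compared with $\dom{\Delta}$ (for instance pattern variables that are never used) do not invalidate correctness for $\sigma$; this is immediate because correctness is a universal statement over $\dom{\Delta'}$ only.
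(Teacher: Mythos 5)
Your proposal is correct and follows essentially the same route as the paper's proof: induction on the length of the reduction with a case analysis on the inference rules, using Lemmas~\ref{lemma:correct-variable-value} and~\ref{lemma:single-variable-correct} for the base cases and Corollary~\ref{cor:inout-rules-are-correct} together with the verification hypothesis to discharge the \vrulename{Func} case exactly as the paper does. Your explicit check that the branch selected at run time in the \vrulename{Case} rule is indeed among the reachable ones (so that the corresponding premise of the rule is actually available), and your reading of the correctness of $\Delta'$ with respect to $\sigma$ extended by $z\mapsto t$, are minor sharpenings of details the paper's proof passes over silently.
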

\begin{proof}
We assume that the preconditions of the theorem hold, i.e.,
$\Delta, z = e \vdash \Delta'$ is derivable with a proof tree of height $h$
and $\sigma$ is a substitution such that $\Delta$ is correct for $\sigma$.
By Lemma~\ref{lemma:correct-variable-value},
$\sigma(x) \in \gamma(\Delta(x))$ for all $x \in \var{e}$.

We prove the claim by induction on the number of steps of
finite derivations of $\sigma(e)$
and a nested induction on the height of the proof tree.
We distinguish the kind of expression $e$:
\begin{description}
\item[Variable:] 
If $e = x$, where $x$ is a variable,
then $\sigma(x) \in \gamma(\Delta(x))$ is a data term
so that the evaluation of $\sigma(e)$ ends in $\sigma(x)$
(base case with $0$ derivation steps)
and $\sigma(x) \in \gamma(\Delta(x)) = \gamma(\Delta'(z))$
(by rule \vrulename{Var}).
With Lemma~\ref{lemma:single-variable-correct}, the claim holds.

\item[Constructor:] 
If $e = c(x_1,\ldots,x_n)$, where $x$ is a variable,
then $\sigma(x_i) \in \gamma(\Delta(x_i))$ are data terms
($i=1,\ldots,n$) so that $\sigma(e)$ is also a data term.
Hence, the evaluation of $\sigma(e)$ ends in the data term $t = \sigma(e)$
(base case with $0$ derivation steps).
Since
$\Delta' = \{\iosubst{z}
    {\{\top^n \iomapsto c^{\alpha}(\seq{\Delta(x_n)})\}}{x_1 \ldots x_n}\}$
(by rule \vrulename{Cons}),
$t = \sigma(e) \in \gamma(\Delta'(z))$ so that the claim holds
with Lemma~\ref{lemma:single-variable-correct}.

\item[Function:]
Let $e = f(x_1,\ldots,x_n)$, $f$ defined by $f(y_1,\ldots,y_n) = e'$,
and $\CT(f) = a_1 \ldots a_n$.
Since \vrulename{Func} is applicable,
$\sigma(x_i) \in \gamma(\Delta(x_i)) \subseteq \gamma(a_i)$
($i=1,\ldots,n$).
Let $\sigma' = \{ \seq{y_n \mapsto \sigma(x_n)} \}$.
Then $\sigma(e)$ is reducible to $\sigma'(e')$ and,
for all finite derivations of $\sigma(e)$, the same result
can be derived from $\sigma'(e')$ with a smaller number of derivations steps
so that we can apply the induction hypothesis to $\sigma'(e')$.
Let
\[
\Delta'' =
\{ (y_1, \{ \epsilon \mapsto a_1 \}, \epsilon), \ldots,
   (y_n, \{ \epsilon \mapsto a_n \}, \epsilon) \}
\]
Since $\sigma'(y_i) = \sigma(x_i) \in \gamma(a_i)$ for $i=1,\ldots,n$,
by Lemma~\ref{lemma:single-variable-correct},
$\Delta''$ is correct for $\sigma'$.
Since the call type of $f$ is verified,
$\Delta'', z' = e' \vdash \Delta'''$
is derivable for some fresh variable $z'$ and some set $\Delta'''$ of
variable types.
By the induction hypothesis, $\sigma'(e')$ is non-failing.
Let $t$ be some value of a derivation of $\sigma'(e')$.
The claim holds by Corollary~\ref{cor:inout-rules-are-correct}:
In particular, there is some $a_1 \cdots a_n \iomapsto a \in \IO(f)$
with $t \in \gamma(a)$.
Since $a \sqsubseteq \Delta'(z)$
(by rule \vrulename{Func} and definition of $\Delta'(z)$),
$t \in \gamma(\Delta'(z))$ so that the claim holds.

\item[Or expression:]
If $e = e_1~\mathit{or}~e_2$, then the first step of a derivation of
$\sigma(e)$ reduces to $\sigma(e_1)$ or $\sigma(e_2)$.
Assume the case $\sigma(e_1)$ (the other is symmetric).
Since $\Delta, z = e_1 \vdash \Delta_1$ is derivable
with a proof tree of height smaller than $h$,
the induction hypothesis implies that
$\sigma(e_1)$ is non-failing, $\Delta_1$ is correct for $\sigma$,
and if $t$ is some value of $\sigma(e_1)$, then $t \in \gamma(\Delta_1(z))$.
Since $\Delta \subseteq \Delta'$,
$\Delta_1(z) \sqsubseteq \Delta'(z)$ and the claim holds
(note that $\Delta_2$ is also correct for $\sigma$ by the induction
hypothesis applied to the proof tree of $\Delta, z = e_2 \vdash \Delta_2$).

\item[Free variables:]
If $e = \mathit{let}~x_1,\ldots,x_n ~\mathit{free~in}~ e'$,
then the first step of a derivation of $\sigma(e)$
reduces to $\sigma(\rho(e'))$ for some substitution
$\rho = \{ \seq{x_n \mapsto t_n} \}$ for the free variables.
By rule \vrulename{Free},
$\Delta \cup
 \{ (x_i, \{\epsilon \iomapsto \top\}, \epsilon) \mid i \in \{1,\ldots,n\} \},
 z = e \vdash \Delta'$
is derivable with a proof tree of height smaller than $h$.
By induction hypothesis (the extended set of variable types is trivially
correct for $\sigma \circ \rho$),
$\sigma(\rho(e'))$ is non-failing,
$\Delta'$ is correct for $\sigma \circ \rho$,
and if $t$ is some value of $\sigma(\rho(e'))$,
then $t \in \gamma(\Delta'(z))$.
This proves the claim.

\item[Let binding:]
Let $e = \mathit{let}~ x = e'~ \mathit{in}~e''$.
First, consider the evaluation of the bound expression $\sigma(e')$.
By rule \vrulename{Let},
$\Delta, x = e' \vdash \Delta'$ is derivable with a proof tree
of height smaller than $h$.
Thus, by the induction hypothesis, $\sigma(e')$ is non-failing
and $\Delta'$ is correct for $\sigma$.
Consider some value $t'$ of $\sigma(e')$
used to evaluate $\sigma(e)$ to $t$.
Then $t' \in \gamma(\Delta'(x))$ by the induction hypothesis.
Let $\sigma' = \{ x \mapsto t' \} \circ \sigma$.
Since $\Delta \cup \Delta', z = e'' \vdash \Delta''$
is derivable with a proof tree of height smaller than $h$,
by the induction hypothesis
(note that $\sigma'(y) \in \gamma((\Delta \cup \Delta')(y)$
for all $y \in \var{e''}$), $\sigma'(e'')$ is also non-failing,
$\Delta''$ is correct for $\sigma'$,
and, if $t''$ is some value of $\sigma'(e'')$, $t'' \in \gamma(\Delta''(z))$.
Thus, the claim holds.

\item[Case expression:]
If $e = \mathit{case}~x~\mathit{of}~\{p_1\to e_1; \ldots; p_n \to e_n\}$,
$\sigma(e)$ is evaluated by selecting some matching branch
(note that the discriminating argument $\sigma(x)$ cannot be
evaluated since $\sigma(x) \in \gamma(\Delta(x))$).
Such a selection step is possible since $\sigma(x) = c_i(y_1,\ldots,y_m)$
for some $i \in \{1,\ldots,n\}$
(since the case expression covers all constructors).
By rule \vrulename{Case},
$\Delta_i, z = e_i \vdash \Delta'_i$ is derivable with a proof tree
of height smaller than $h$, where
$p_i = c_i(x_1,\ldots,x_m)$ and
$\Delta_i = (\Delta \wedge [x \mapsto c_i])
            \cup \{ x_1 :: \top, \ldots, x_m :: \top \}$.
Since $(\Delta \wedge [x \mapsto c_i])$ constrains the abstract value
of $x$ to the abstraction of the actual value $\sigma(x)$,
$\Delta_i$ is correct for $\sigma \circ \rho$.
Thus, by the induction hypothesis,
$\sigma(\rho(e_i))$ is non-failing, where
$\rho$ is the matching substitution on the variables of pattern $p_i$,
and $\Delta'_i$ is correct for $\sigma \circ \rho$.
If $t$ is some value of $\sigma(\rho(e_i))$,
then $t \in \gamma(\Delta'_i(z))$ so that the claim holds
(since $\Delta'_i \subseteq \Delta'$).
\qed
\end{description}
\end{proof}
The previous theorem implies the following property of
successfully checked call types.
\begin{corollary}[Correctness of call type checking]
If, for all operations $f$ defined by $f(x_1,\ldots,x_n) = e$
and $\CT(f) = a_1 \ldots a_n$, the judgement
\[
\{ (x_1, \{ \epsilon \mapsto a_1 \}, \epsilon), \ldots,
   (x_n, \{ \epsilon \mapsto a_n \}, \epsilon) \}, z = e \vdash \Delta
\]
is derivable (for some fresh variable $z$),
then $f(t_1,\ldots,t_n)$ is non-failing if the
arguments satisfy the call types, i.e.,
$t_i \in \gamma(a_i)$ ($i=1,\ldots,n$).
\end{corollary}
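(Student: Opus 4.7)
The plan is to reduce this corollary directly to Theorem~\ref{theo:calltypecheck-rules-are-correct}. Given $t_1,\ldots,t_n$ with $t_i \in \gamma(a_i)$, I would first form the substitution $\sigma = \{x_1 \mapsto t_1,\ldots, x_n \mapsto t_n\}$. Any reduction of $f(t_1,\ldots,t_n)$ begins by unfolding the single defining rule of $f$ to $\sigma(e)$, where $e$ is the right-hand side, and the subsequent reductions form a derivation of $\sigma(e)$. Hence the non-failure of $f(t_1,\ldots,t_n)$ is equivalent to: every finite reduction of $\sigma(e)$ ends in a data term.

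Next, I would check that the hypotheses of Theorem~\ref{theo:calltypecheck-rules-are-correct} are met for the initial variable-type set
\[
\Delta_0 = \{(x_1, \{\epsilon \iomapsto a_1\}, \epsilon), \ldots, (x_n, \{\epsilon \iomapsto a_n\}, \epsilon)\}.
\]
The derivability of $\Delta_0, z = e \vdash \Delta$ is exactly the hypothesis of the corollary, and since this holds for \emph{every} operation in the program, the global assumption of Theorem~\ref{theo:calltypecheck-rules-are-correct} (``the call types of all functions specified by $\CT$ are verified'') is satisfied. The side conditions $\var{e} \subseteq \dom{\Delta_0}$ and $z \notin \var{e}$ hold because the FlatCurry convention restricts the right-hand side to reference only the formal parameters plus variables bound by $\mathit{let}$, $\mathit{free}$, or $\mathit{case}$ patterns, and $z$ is chosen fresh.

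Finally, I would verify that $\Delta_0$ is correct for $\sigma$. For each $x_i$, the singleton $\{(x_i, \{\epsilon \iomapsto a_i\}, \epsilon)\}$ is correct for $\sigma$ by Lemma~\ref{lemma:single-variable-correct}, since $\sigma(x_i) = t_i \in \gamma(a_i)$. As correctness of a variable-type set is defined pointwise on the variables in its domain and the $x_i$ are pairwise distinct, correctness of the union $\Delta_0$ follows immediately. Theorem~\ref{theo:calltypecheck-rules-are-correct} then yields that every finite reduction of $\sigma(e)$ terminates in a data term, which is precisely non-failure of $f(t_1,\ldots,t_n)$.

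There is no real obstacle here: the corollary is essentially the theorem instantiated at a single top-level call, and all the work is bookkeeping. The only mildly delicate point is making sure that $\Delta_0$, whose triples carry empty parameter lists $\epsilon$, genuinely matches the ``correctness for $\sigma$'' definition; Lemma~\ref{lemma:single-variable-correct} is tailored exactly for this purpose and dispatches it immediately.
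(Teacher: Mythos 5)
Your proposal is correct and follows essentially the same route as the paper: instantiate $\sigma = \{x_1 \mapsto t_1,\ldots,x_n \mapsto t_n\}$, use Lemma~\ref{lemma:single-variable-correct} to establish that $\Delta_0$ is correct for $\sigma$, and apply Theorem~\ref{theo:calltypecheck-rules-are-correct} to conclude that all finite reductions of $\sigma(e)$ end in a data term. The extra bookkeeping you supply (checking $\var{e} \subseteq \dom{\Delta_0}$, $z \notin \var{e}$, and the global verification hypothesis) is left implicit in the paper but is entirely consistent with its argument.
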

\begin{proof}
Let
\[
\Delta_0 = \{ (x_1, \{ \epsilon \mapsto a_1 \}, \epsilon), \ldots,
              (x_n, \{ \epsilon \mapsto a_n \}, \epsilon) \}
\]
and $\sigma = \{ x_1 \mapsto t_1, \ldots, x_n \mapsto t_n \}$
where $t_i \in \gamma(a_i)$ ($i=1,\ldots,n$).
By Lemma~\ref{lemma:single-variable-correct},
$\Delta_0$ is correct for $\sigma$.
By Theorem~\ref{theo:calltypecheck-rules-are-correct},
all finite reductions of $\sigma(e)$ end in some data term
so that $f(t_1,\ldots,t_n)$ is non-failing.
\qed
\end{proof}

\end{document}